\newtheorem{theorem}{Theorem}
\newtheorem{lemma}[theorem]{Lemma}
\newtheorem{definition}[theorem]{Definition}
\newenvironment{proofof}[1]{\begin{trivlist} \item {\bf Proof
#1:~~}}
  {\qed\end{trivlist}}
\renewenvironment{proofof}[1]{\par\medskip\noindent{\bf Proof of #1: \ }}{\hfill$\Box$\par\medskip}
\newcommand{\etal}{{\em et al. }}
\newcommand{\ie}{{\em i.e. }}
\newcommand{\eps}{\varepsilon}
\newcommand{\E}{\mathsf{E}}
\newcommand{\COMMENTED}[1]{{}}
\newcommand{\Ex}[1]{\ensuremath{\mathbf{E}[#1]}}
\DeclareMathOperator{\polylog}{poly log}
\newcounter{proccnt}
\author{ 
Graham Cormode\thanks{Department of Computer Science, University of
  Warwick, UK. Supported in part by European Research Council grant
  ERC-2014-CoG 647557 and a Royal Society Wolfson Research Merit Award. {\tt g.cormode@warwick.ac.uk}.}
\and
Hossein Jowhari\thanks{Department of Computer Science, University of
  Warwick, UK. Supported by European Research Council grant
  ERC-2014-CoG 647557. 
  {\tt H.Jowhari@warwick.ac.uk }.}
\and
Morteza Monemizadeh\thanks{Rutgers University, Piscataway, NJ, USA. {\tt mortezam@dimacs.rutgers.edu}.}
\and 
S. Muthukrishnan  \thanks{Rutgers University, Piscataway, NJ, USA. {\tt muthu@cs.rutgers.edu}.}
}
\date{\today}
\title{The Sparse Awakens: Streaming Algorithms for Matching Size Estimation in Sparse Graphs}
\begin{document}

\sloppy
\setlength{\abovecaptionskip}{0.1ex}
 \setlength{\belowcaptionskip}{0.1ex}
 \setlength{\floatsep}{0.1ex}
 \setlength{\textfloatsep}{0.1ex}

 \abovedisplayskip.30ex
   \belowdisplayskip.30ex
   \abovedisplayshortskip.30ex
   \belowdisplayshortskip.30ex
   
\captionsetup{belowskip=12pt,aboveskip=4pt}

\begin{titlepage}
  \maketitle
  \thispagestyle{empty}
\addtocounter{page}{-1}

\begin{abstract}
Estimating the size of the maximum matching is a canonical problem in
graph algorithms, and one that has attracted extensive study over a
range of different computational models. 
We present improved streaming algorithms for approximating the size of maximum matching with sparse (bounded arboricity) graphs. 

\begin{itemize}
\item {\em (Insert-Only Streams)}
We present a one-pass algorithm that takes  $O(c\log^2 n)$ space and approximates the size of the  maximum matching in 
graphs with arboricity $c$ within a factor of $O(c)$. This improves significantly upon the state-of-the-art $\tilde{O}(cn^{2/3})$-space streaming algorithms.  

\item {\em (Dynamic Streams)}
Given a dynamic graph stream (i.e., inserts and deletes) of edges of an underlying $c$-bounded arboricity graph,  
we present an one-pass algorithm that uses  space $\tilde{O}(c^{10/3}n^{2/3})$ 
and returns an $O(c)$-estimator for the size of the maximum matching. 
This algorithm improves the state-of-the-art $\tilde{O}(cn^{4/5})$-space algorithms, where the $\tilde{O}(.)$ notation hides logarithmic in $n$ dependencies.

\end{itemize}

 In contrast to the previous works, our results take more advantage of the streaming access to the input and characterize the matching size 
based on the ordering of the edges in the stream in addition to the degree
distributions and structural properties of the sparse graphs. 



\end{abstract}
\end{titlepage}

\section{Introduction}

Problems related to (maximum) matchings in graph have a long history
in Combinatorics and Computer Science.
They arise in many contexts, from choosing which advertisements to display to
online users, to characterizing properties of chemical compounds.
Stable matchings have a suite of applications, from assigning students
to universities, to arranging organ donations.
These have been addressed in a variety of different computation
models, from the traditional RAM model, to more recent sublinear
(property testing) and external memory (MapReduce) models.
Matching has also been studied for a number of classes of input graph:
including general graphs, bipartite graphs, weighted graphs, and
those with some sparsity structure. 

We focus on the streaming case, where each edge is seen
once only, and we are restricted to space sublinear in the size of the graph (ie., no. of its vertices).
In this case, the objective is to find (approximately) the size of the
matching.
Even here, results for general graphs are either weak or make assumptions
about the input or the stream order. In this work, we seek to improve
the guarantees by restricting to graphs that have some measure of
sparsity -- bounded arboricity, or bounded degree.
This aligns with reality, where most massive graphs have
asymptotically fewer than $O(n^2)$ edges.



Recently, Kapralov,  Khanna, and Sudan~\cite{KKS14} developed a streaming algorithm which computes an estimate of matching size for general graphs 
within a factor of $O(\polylog(n))$ in the \emph{random-order} streaming model using $O(\polylog(n))$ space.
In the random-order model, the input stream is assumed to be chosen uniformly at random from the set of all possible permutations of the edges.
Esfandiari \etal \cite{EsfandiariHLMO15} were the first to study streaming
algorithms for estimating the size of matching in bounded arboricity graphs in the \emph{adversarial-order} streaming model, 
where the algorithm is required to provide a good approximation for any ordering of edges. 
Graph arboricity is a measure to quantify the density of a given graph.
A graph $G(V,E)$ has arboricity $c$ if the set $E$ of its edges can be
partitioned into at most $c$ forests. 
Since a forest on $n$ nodes has at most $n-1$ edges, a graph with
arboricity $c$ can have at most $c(n-1)$ edges.
Indeed, by a result of Nash-Williams~\cite{NW61,NW64} this holds for any subgraph of a $c$-bounded arboricity graph $G$.
Formally, the Nash-Williams Theorem~\cite{NW61,NW64} states that $c = \max_{U \subseteq V}\{|E(U)|/(|U| -1)\}$, where $|U|$ and $|E(U)|$ are the
number of nodes and edges in the subgraph with nodes $U$, respectively.
Several important families of graphs have constant arboricity. 
Examples include planar graphs (that have arboricity $c=3$), bounded genus graphs, bounded treewidth graphs, and more generally, 
graphs that exclude a fixed minor.\footnote{It can be shown that for an $H$-minor-free graph, 
the arboricity number is $O(h\sqrt{h})$ where $h$ is the number of vertices of $H$.~\cite{Kost84}} 

The important observation in \cite{EsfandiariHLMO15} is that the size of matching in bounded arboricity 
graphs can be approximately characterized by the number of high degree vertices (vertices with degree above a 
fixed threshold) and 
the number of so called {\em shallow edges} (edges with both low degree endpoints).   
This characterization allows for estimation of the matching size in sublinear space by 
taking samples from the vertices and edges of the graph. The work of
\cite{EsfandiariHLMO15} implements the characterization in $\tilde{O}(cn^{2/3})$ space 
giving a $O(c)$ approximation of the matching size. Subsequent works \cite{BS15,MV16} consider alternative 
characterizations and improve upon the approximation factor however they do not result in 
major space improvements.

\subsection{Our Contributions}

We present major improvements in the space usage of streaming algorithms for sparse graphs ($c$-bounded Arboricity Graphs). 
Our main result is a polylog space algorithm that beats the $n^{\eps}$ space bound of prior algorithms. More precisely, we show: 


%

\begin{theorem} 
\label{thm:arboricity:logn}
Let $G(V,E)$ be a graph with arboricity bounded by $c$. 
Let $S$ be an (adversarial order) insertion-only stream of the edges of the underlying graph $G$. 
Let $M^*$ be the size of the maximum matching of $G$ (or $S$ interchangeably). 
Then, there is a randomized $1$-pass streaming algorithm 
that outputs a  $(22.5c+6)(1+\eps)$-approximation to  $M^*$ with probability at least $1-\delta$ 
  and takes $O(\frac{c}{\eps^2}\log(\frac{1}{\eps})\log cn\log n)$ space.  
\end{theorem}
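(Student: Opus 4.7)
The plan is to decompose the proof into (i) a structural characterization of $M^*$ in terms of quantities that depend on the stream order, and (ii) a small-space estimator for that characterization. For the characterization I would fix a threshold $\tau = \Theta(c)$ and define $h$ as the number of heavy vertices (those with $\deg(v)\ge \tau$) and $\ell$ as the number of \emph{stream-light} edges: edges $e=(u,v)$ such that at the moment $e$ is read from $S$, both $u$ and $v$ have running degree strictly less than $\tau$. Setting $\Phi = h + \ell$, I aim to prove $M^* \le \Phi \le (22.5c+6)\,M^*$. The upper bound $M^* \le \Phi$ is a charging argument: each matching edge $(u,v) \in M^*$ either has both endpoints of final degree $<\tau$---in which case both running degrees are automatically $<\tau$ at arrival, so the edge is stream-light---or has a heavy endpoint that absorbs its charge, with no heavy vertex charged twice since $M^*$ is a matching. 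For the lower bound I split into two pieces: (a) the stream-light subgraph has maximum degree at most $\tau$, since a vertex $v$ contributes stream-light incidences only among its first $\tau$ incident edges, so by the standard greedy-matching bound $\ell \le (2\tau-1)\,M^*$; (b) for $\tau$ a sufficiently large constant multiple of $c$, every unmatched heavy vertex must have all of its $\ge\tau$ neighbours inside $V(M^*)$ (else $M^*$ could be augmented), and Nash--Williams applied to the induced subgraph on (unmatched heavy) $\cup\,V(M^*)$ forces the number of unmatched heavy vertices to be $O(M^*)$, so $h=O(M^*)$.

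The estimator is a multi-level hash-based sampler. For each level $k=0,1,\dots,O(\log(cn))$ I hash vertices via a pairwise-independent family into a sample with probability $p_k = 2^{-k}$, and for each sampled vertex I maintain an exact running-degree counter capped at $\tau$. When an edge $(u,v)$ arrives with both endpoints in the level-$k$ sample, I record it along with the Boolean ``both running degrees $<\tau$'' read off the two counters \emph{at that instant}---no retrospection is required, which is the key reason the stream-order definition is preferable to a final-degree one. Unbiased estimators at level $k$ are $\hat h^{(k)} = p_k^{-1}\cdot|\{\text{sampled heavy vertices}\}|$ and $\hat\ell^{(k)} = p_k^{-2}\cdot|\{\text{sampled stream-light edges}\}|$. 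A Chernoff bound yields $(1\pm\eps)$ relative accuracy at the coarsest level where each sampled total is $\Omega(\eps^{-2}\log(1/\delta))$, and a simple geometric-search rule selects this level from the sample data at an $O(\log(1/\eps))$ overhead.

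The delicate point, which I expect to be the main obstacle, is keeping the total space at $\tilde O(c)$ rather than $\tilde O(n)$: a naive implementation has $p_k n$ counters per level and therefore $\Theta(n)$ space at level $0$. I plan to cap the active sample at each level at $R = O(\eps^{-2}\log(1/\eps))$ counters using reservoir sampling over vertices as they first appear in the stream, and to exploit the arboricity bound $|E|\le cn$ to show that (i) the number of sampled edges at level $k$ is at most $O(p_k^2\,cn)$, which stays within the $R$ budget, and (ii) the regime $M^* = O(R)$ is handled by running an exact greedy matching of capped size $R$ in parallel, which certifies $M^* < 2R$ whenever the sampled estimator is too noisy. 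Summing the per-level space over $O(\log(cn))$ levels and multiplying by $O(\log n)$ bits per counter gives the claimed $O\!\bigl(\tfrac{c}{\eps^2}\log(1/\eps)\log(cn)\log n\bigr)$ bound, and a union bound over levels yields the stated failure probability $\delta$.
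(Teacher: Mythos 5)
Your structural characterization $\Phi = h + \ell$ is plausible and resembles the paper's Lemma~\ref{lem:hc+1}, but the algorithmic part has a gap that I do not see how to close, and the gap traces back to the \emph{direction} in which your stream-order condition looks. You define a stream-light edge by the \emph{running} (past) degree of both endpoints at the moment the edge arrives. That condition can only be evaluated if you have been counting each endpoint's degree since the start of the stream, which is exactly why you are forced into vertex sampling with per-vertex counters. The paper's $\alpha$-good condition (Definition~\ref{alphagood}) looks \emph{forward}: an edge is $\alpha$-good if its endpoints have at most $\alpha$ neighbours \emph{in the remainder of the stream}. That flip is the whole point. It lets the algorithm sample \emph{edges}: when an edge is sampled, you open two fresh counters and count forward, needing $O(1)$ state per sampled edge, and you can abandon the test the instant a counter passes $\alpha$. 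Lemma~\ref{lem:E_alpha_alg} then controls the number of simultaneously alive tests at the ``correct'' level, via the structural observation that at every prefix the number of $\alpha$-good edges so far is $O(\alpha^2/c)\cdot|E_\alpha|$. Your remark that ``no retrospection is required'' picks the wrong definition to avoid retrospection: stream-lightness is retrospective in precisely the sense that hurts, because a sampled edge's past cannot be reconstructed.

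The quantitative consequence is that your $\hat\ell^{(k)}$ estimator cannot be made $(1\pm\eps)$-accurate in polylogarithmic space. To register a stream-light edge you need \emph{both} endpoints in the level-$k$ vertex sample, so the expected number of sampled stream-light edges is $p_k^2\,\ell$, and concentration needs $p_k^2\,\ell=\Omega(\eps^{-2})$, i.e.\ $p_k=\Omega(\eps^{-1}/\sqrt{\ell})$; meanwhile the counters cost $\Theta(p_k n)$ space. Combining gives space $\Omega(\eps^{-1} n/\sqrt{\ell})$, and since $\ell\le |E|\le cn$ this is always $\Omega(\eps^{-1}\sqrt{n/c})$, never polylogarithmic. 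The reservoir cap at $R=O(\eps^{-2}\log(1/\eps))$ does not rescue this: with at most $R$ live vertices the effective probability of catching both endpoints of any given edge is $\Theta((R/n)^2)$, so the expected number of sampled stream-light edges is $O(R^2 c/n)\to 0$, and evicting a vertex from the reservoir destroys its running-degree counter, which cannot be restarted. The capped greedy matching only certifies the regime $M^*=O(R)$; in the regime $M^*\gg R$ you are back to the $\sqrt{n}$ barrier. In fact, what you have sketched is essentially the paper's $\tilde O(\sqrt{n})$ algorithm of Section~\ref{sec:sqrtn.alg} (Algorithms~\ref{alg:Mc} and~\ref{alg:Mc2}), which also samples vertices, stores incident edges, and uses a greedy-matching fallback; it does not yield the $\tilde O(c\log^2 n)$ bound of Theorem~\ref{thm:arboricity:logn}. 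To get polylog space you need both the forward-looking $\alpha$-good characterization and the level-termination argument that bounds the number of surviving $\alpha$-good tests.
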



For the case of dynamic streams
(i.e, streams of inserts and deletes of edges), we design a different algorithm using $\tilde{O}(c^{10/3}n^{2/3})$ space 
which improves the $\tilde{O}(cn^{4/5})$-space dynamic (insertion/deletion) streaming algorithms of \cite{BS15,CCEHMMV16}. 
The following theorem states this result (proved in
Section~\ref{sec:dynamicalg}).

\begin{theorem} 
\label{thm:insert:delete}  
Let $G(V,E)$ be a graph with the arboricity bounded by $c$. 
Let $M^*$ be the size of the maximum matching of $G$. 
Let $S$ be a dynamic stream of edge insertions and deletions of the underlying graph $G$ of length at most $O(cn)$ . 
Let $\beta=\mu( (2\mu)/(\mu - 2c+1)+1)$ where $\mu > 2c$. 
Then, there exists a streaming algorithm that takes $O(\frac{\beta^{4/3}(nc)^{2/3}}{\eps^{4/3}})$ space in expectation 
and outputs a $(1+ \eps)\beta$ approximation of $M^*$ with probability at least $0.86$. 
\end{theorem}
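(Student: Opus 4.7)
\medskip\noindent\textbf{Step 1 -- Structural reduction.} The plan follows the template of \cite{EsfandiariHLMO15}: reduce $M^*$ to two simpler graph statistics, sketch each of them in a linear (deletion-friendly) way, and then balance the sampling rates against the promised stream length $O(cn)$. Fix the degree threshold $\mu>2c$. Call a vertex \emph{heavy} when its $G$-degree is at least $\mu$ and \emph{light} otherwise, and call an edge \emph{shallow} when both of its endpoints are light. Let $h$ and $s$ denote the counts of heavy vertices and shallow edges. I would first establish $M^*\le h+s\le \beta M^*$. The lower inequality follows by charging each edge of a maximum matching to a unique heavy endpoint when one exists (using vertex-disjointness of $M^*$) and otherwise to itself as a shallow edge. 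The upper inequality combines two standard arboricity-based observations: the shallow subgraph has maximum degree $<\mu$, so $M^*\ge s/(2\mu-1)$; and an orientation argument using $c$-bounded arboricity together with Nash--Williams bounds $h$ in terms of $M^*$ with the factor $2\mu/(\mu-2c+1)$, summing to exactly $\beta=\mu\bigl(2\mu/(\mu-2c+1)+1\bigr)$.

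\medskip\noindent\textbf{Step 2 -- Linear-sketch estimators for $h$ and $s$.} For $\hat h$, sample each vertex independently with probability $p$ and, for every sampled vertex, maintain a signed counter updated by $+1$ (resp.\ $-1$) on each insertion (resp.\ deletion) of an incident edge; return $\hat h=p^{-1}\bigl|\{v\in V_{\text{samp}}:\mathrm{counter}(v)\ge\mu\}\bigr|$. A Chernoff bound yields $(1\pm\eps)$ accuracy whenever $p\,h=\Omega(\eps^{-2})$. For $\hat s$, run $K$ independent $\ell_0$-samplers over the edge set to extract (near-)uniform edge samples. Classifying a sampled edge $(u,v)$ as shallow requires the dynamic degrees of $u$ and $v$, which must be answered \emph{after} the sample is drawn. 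I would handle this by coupling the edge samplers with an independent vertex subsample of rate $p'$: a sampled edge contributes to $\hat s$ only when both endpoints lie in the subsample (so that their counters are available), and the estimator is rescaled by $|E|(p')^{-2}/K$. The expected number of classifiable shallow samples is $K(p')^{2}s/|E|$, so we require $K(p')^{2}s/|E|=\Omega(\eps^{-2})$.

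\medskip\noindent\textbf{Step 3 -- Parameter balancing and success amplification.} Because $h+s\in[M^*,\beta M^*]$, at least one of $h,s$ is $\Omega(M^*/\beta)$, so it suffices to drive both estimators to $(1\pm\eps)$ accuracy at that scale. Take $p=\tilde\Theta(\beta/(\eps^2 M^*))$ and choose $p',K$ so as to equalise the vertex-counter budget $(p+p')n$ with the edge-sample budget $K\cdot\polylog(n)$; using $|E|\le cn$, this yields $p'n\asymp K\asymp(\beta cn/\eps^2)^{2/3}\cdot\beta^{\Theta(1)}$, which upon substitution gives the claimed $O(\beta^{4/3}(cn)^{2/3}/\eps^{4/3})$ expected space. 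Since $M^*$ is unknown, I would run $O(\log cn)$ copies with geometrically spaced guesses of $M^*$ and output the estimate from the copy whose guess is consistent with its output, absorbing the cost of the wrong copies into the expectation. A union bound over a constant number of concentration events then yields overall success probability $\ge 0.86$.

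\medskip\noindent\textbf{Main obstacle.} The delicate step is producing $\hat s$ in a single dynamic pass: $\ell_0$-sampling edges is routine, but classifying a sampled edge as shallow requires trustworthy endpoint degrees that are only revealed after the sample is extracted, and a dynamic stream offers no direct way to ``go back''. The co-sampling trick pays an extra factor of $(p')^{-2}$ in the edge-sample count, and it is precisely this penalty that drags the exponent of $n$ down from the $4/5$ of \cite{BS15,CCEHMMV16} to the $2/3$ in the theorem, once $p'$ is tuned to $\tilde\Theta(\sqrt{\beta/(\eps M^*)})$. Keeping the resulting rescaled shallow-edge estimator simultaneously unbiased, concentrated, and implementable as a linear sketch (so that deletions are processed correctly) is, in my view, the main technical burden of the proof.
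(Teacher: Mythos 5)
Your proposal takes a genuinely different route from the paper, and there is a real gap in the arithmetic of Step~3. The paper does not estimate $h+s$ (heavy vertices plus shallow edges) at all; instead, Lemma~\ref{lem:hc+1} characterizes $M^*$ by $h_\mu+M_\mu$ (where $M_\mu$ is the matching number of the low-degree induced subgraph, bounded below by $n_L$, the number of non-isolated vertices there), and this is estimated by a \emph{single} pure vertex-sampling sketch (Algorithm~\ref{alg:Mc}): sample vertices at rate $p$, keep signed degree counters for them and for their observed neighbors (feasible because the dynamic stream is promised to have length $O(cn)$, so the expected neighborhood of a sampled vertex is $O(c)$), and count a sample $v$ if it is high-degree, or is low-degree and has a neighbor that currently looks low-degree. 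This costs $O(pnc\log n)$ with $p=\Theta(\beta^2/(\eps^2 t))$ by Lemma~\ref{lem:Mc}; balancing against the $O(t^2)$-space randomized maximal-matching threshold detector of \cite{CCEHMMV16} gives $t^3=\Theta(\beta^2 nc/\eps^2)$ and hence the stated $O(\beta^{4/3}(nc)^{2/3}/\eps^{4/3})$. The whole point of the characterization is that there is no two-way coincidence probability to pay for.

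Your $\ell_0$-sample-plus-vertex-cosample route pays exactly that price, and the balance does not come out to $n^{2/3}$ once you account for the threshold detection you still need. Your own constraint $K(p')^2 s/|E|=\Omega(\eps^{-2})$ with $s$ at scale $M^*/\beta$, $|E|\le cn$, and $M^*\ge t$ gives $K(p')^2=\Omega(\beta cn/(\eps^2 t))$. Minimizing $p'n+K$ under $K\,(p'n/n)^2=\beta cn/(\eps^2 t)$ forces $p'n\asymp K\asymp(\beta cn^3/(\eps^2 t))^{1/3}$, not the $(\beta cn/\eps^2)^{2/3}$ you wrote. You also cannot dispense with the threshold: your ``geometrically spaced guesses of $M^*$'' does not avoid it, because a copy whose guess is too small demands a sampling rate near $1$ and blows up the space, and you give no abort rule that would keep the expectation bounded. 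Adding the $O(t^2)$ term needed to detect small $M^*$ and rebalancing $t^2=(\beta cn^3/(\eps^2 t))^{1/3}$ gives $t^7=\beta cn^3/\eps^2$, i.e.\ $t^2=\Theta(\beta^{2/7}c^{2/7}n^{6/7}/\eps^{4/7})$. The $n^{6/7}$ dependence is worse than both the claimed $n^{2/3}$ and the prior $n^{4/5}$ of \cite{BS15,CCEHMMV16}, so the remark in your final paragraph that the $(p')^{-2}$ penalty is ``precisely'' what brings the exponent down to $2/3$ has the sign backwards: that penalty is exactly what prevents the $\ell_0$-cosampling route from reaching the theorem's bound, and is the reason the paper moves away from shallow-edge counting altogether.
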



Our algorithms for bounded arboricity graphs are based on two novel {\em streaming-friendly} characterizations of the maximum maching size. 
The first characterization is a modification of the characterization in \cite{EHLMO15} which 
approximates the size of the maximum matching by $h_\mu+s_\mu$ where $h_\mu$ 
is defined as the number of high degree vertices (vertices with degree more than a threshold $\mu$) and $s_\mu$ is the number of shallow edges 
(edges with low degree endpoints). While $h_\mu$ can be easily approximated 
by sampling the vertices and checking if they are high degree or not,
approximating $s_\mu$ in sublinear space is a challenge because in one
pass we cannot determine 
if a sampled edge is shallow or not. The work of \cite{EHLMO15}
resolves this issue by sampling the edges at a high rate and manages to
implement their characterization in $\tilde{O}(cn^{2/3})$ space for adversarial 
insert-only streams. 

To bring the space usage down to $\tilde{O}(c^{2.5}n^{1/2})$ (for insert-only streams), we modify 
the formulation of the above characterization. We still need to approximate $h_\mu$ but instead of $s_\mu$ we approximate $n_L$ the number of non-isolated 
vertices in the induced subgraph $G_L$ defined over the low degree
vertices. Note that $s_\mu$ is the number of edges in $G_L$.
This subtle change of definition turns out to be immensely helpful.
Similar to $h_\mu$ we only need to sample the nodes and check if their degrees
are below a certain threshold or not, however we carry the
additional constraint that we have to avoid counting the nodes in $G_L$ that are isolated (have only high degree nodes as neighbors). 
To satisfy this additional constraint, our algorithm stores the
neighbors of the sampled vertices along with a counter for each that
maintains their degree in the rest of the stream.
Although we only obtain a lower bound on the degree of the neighbors,
as it turns out the lower bound information on the degree is still
useful because we can ensure the number of false positives
that contribute to our estimate is within a certain limit.
As result, we can approximate 
$h_\mu+n_L$ using $\tilde{O}(cn^{1/2})$ space which gives a $(2c+1)(2c+2)$ approximation
of the maximum matching size after choosing appropriate values for $\mu$ and other parameters.
This characterization is of particular importances, as it can be adapted to work under edge deletions as well as long as the number of deletions is bounded by $O(cn)$.    
Details
of the characterization and the associated algorithms are given in Lemma \ref{lem:hc+1} and Section \ref{sec:sqrtn.alg}.

To obtain a $\polylog(n)$ space algorithm (and prove the claim of Theorem \ref{thm:arboricity:logn}), 
we give a totally new characterization. This characterization, unlike the previous ones that only depend on the parameters 
of the graph, also takes  the ordering of the edges in the stream into account. Roughly speaking, we characterize the size of a maximum matching by the number of 
edges in the stream that have few neighbor edges in the rest of the
stream. To understand the connection with maximum matching, consider
the following simplistic special case.
Suppose the input graph $G$ is a forest composed of $k$ disjoint
stars.
Observe that the maximum matching on this graph is just to pick one
edge from each star. 
We relate this to a combinatorial characterization that arises from
the sequence of edges in the stream: 
no matter how we
order the edges of $G$ in the stream, from each star there is exactly one edge that has no neighboring edges 
in the remainder of the stream (in other words, the last edge of the star in the stream). 
Our characterization generalizes this idea to graphs with arboricity bounded by $c$ by counting the {\it $\alpha$-good edges}, 
{\ie}edges that have at most $\alpha=6c$ neighbors in the remainder of
the stream.
We prove this characterization gives an $O(c)$ approximation of the
maximum matching size. More important, a nice feature of this characterization is that it can be implemented in $\polylog(n)$ space if one allows a $1+\eps$ approximation. 
The implementation adapts an idea from the well-known $L_0$ sampling algorithm.       
It runs $O(\log n)$ parallel threads each sampling the stream at a different rate.
At the end, a thread ``wins'' that has sampled
roughly $\Theta(\log n)$ elements from the $\alpha$-good edges (samples the edges with a rate of $\frac{\log n}{k}$ where $k$ is the number of $\alpha$-good edges). 
The threads that under-sample will end up with few edges or nothing while the ones that have oversampled will keep too many $\alpha$-good edges and will be terminated as result.

Table~\ref{fig:results} summarizes the known and new results for estimating the size of a maximum matching.

\begin{table*}[t]
\begin{center}
\renewcommand{\arraystretch}{1.3}
\begin{tabular}{|l||l|l|l|l|l}
\hline 
Reference  & Graph class & Stream  & Approx. Factor $^*$ & Space Bound$^{**}$   \\
\hline
\cite{KKS14}     & General                         & Random    Order    & $O(\polylog(n))$              &  $O(\polylog(n))$  \\

\cite{EHLMO15} & Arboricity $\le c$ &Insert-Only & $5c+9$ & $\tilde{O}(cn^{2/3})$ \\
\cite{MV16} & Arboricity $\le c$ &Insert-Only & $c+2$ & $\tilde{O}(cn^{2/3})$ \\
\cite{BS15,CCEHMMV16} & Arboricity $\le c$ &Insert/Delete & $O(c)$ & $\tilde{O}(cn^{4/5})$  \\
\hline
This paper  & Arboricity $\le c$ & Insert-Only & $(2c+1)(2c+2)$ &$\tilde{O}(c^{2.5}\sqrt{n})$  \\
This paper  & Arboricity $\le c$ & Insert/Delete & $(2c+1)(2c+2)$ &$\tilde{O}(c^{10/3}n^{2/3})$   \\
This paper  & Arboricity $\le c$ & Insert-Only & $22.5c+6$ &$\tilde{O}(c\log^2 n)$ \\
\hline
\end{tabular}
\end{center}
\caption{Known results for estimating the size of a maximum matching in data streams. 
(*) In some entries, a $(1+\eps)$ multiplicative factor has been
  suppressed for concision. (**) The $\tilde{O}(.)$ notation hides logarithmic in $n$ dependencies. }
\label{fig:results}
\end{table*}

\subsection{Further Related Streaming Work}
  In the classical offline model, where we assume we have enough space to store all vertices and 
edges of a graph $G=(V,E)$, the problem of computing the maximum matching of $G$ 
has been extensively studied.
The best result in this model is the $30$-years-old algorithm due to Micali and Vazirani~\cite{MV80}
with running time $O(m \sqrt{n})$, where $n=|V|$ and $m=|E|$.
A matching of size within $(1-\eps)$ factor of a maximum cardinality matching can be found in $O(m/\eps)$ time \cite{HopcroftK73,MV80}. 
Very recently, Duan and Pettie \cite{DP14} develop a $(1-\eps)$-approximate maximum weighted matching algorithm in time  $O(m/\eps)$.


The question of approximating the maximum cardinality matching has
been extensively studied in the streaming model. An $O(n)$-space greedy algorithm trivially obtains a maximal matching, which is a $2$-approximation for the maximum cardinality matching~\cite{FKMSZ05}. 
A natural question is whether one can beat the approximation factor of the greedy algorithm with $O(n\polylog(n))$ space. Recently, 
it was shown that obtaining an approximation factor better than $\frac{e}{e-1}\simeq 1.58$ in one pass requires $n^{1+\Omega(1/\log\log n)}$ space \cite{GoelKK12,Kap13},
even in bipartite graphs and in the \emph{vertex-arrival} model, where the vertices arrive in the stream together with their incident edges. 
This setting has also been studied in the context of \emph{online algorithms}, where each arriving vertex has to be either matched or discarded irrevocably upon arrival. 
Seminal work due to Karp, Vazirani and Vazirani \cite{KVV90} gives an online algorithm with $\frac{e}{e-1}$ approximation factor in this model. 

Closing the gap between the upper bound of $2$ and the lower bound of $\frac{e}{e-1}$ remains one of the most appealing open problems in the graph streaming area (see~\cite{P60}). 
The factor of 2 can be improved on if one either considers the random-order model or allows for two passes \cite{KMM12}. 
By allowing even more passes, the approximation factor can be improved to multiplicative $(1-\epsilon)$-approximation via finding and applying augmenting paths with
successive passes~\cite{McGregor:05,McGregor:09,EKMS12,EggertKMS12,AGM12a}.

Another line of research~\cite{FKMSZ05,McGregor:05,Zelke12,EpsteinLMS11} has explored the question of approximating 
the maximum-weight matching in one pass and $O(n\polylog(n))$ space. Currently, the best known approximation factor equals $4 + \eps$ (for any positive constant $\eps$)~\cite{CS14}.

\section{Preliminaries and Notations}
%
%
%

%
%
Let $G(V,E)$ be an undirected unweighted graph with $n=|V|$ vertices and $m=|E|$ edges. 
For a vertex $v\in V$, let $\deg_G(v)$ denote the degree of vertex $v$ in $G$. 
%
%
%
A \emph{matching} $M$ of $G$ is a set of pairwise non-adjacent edges, i.e., no two edges share a common edge. 
Edges in $M$ are called \textit{matched} edges; the other edges are called \textit{unmatched}.
A \emph{maximum matching} of graph $G(V,E)$ is a matching of maximum size. 
Throughout the paper, when we fix a maximum matching of $G(V,E)$, we denote it by $M^*$.
A matching $M$ of $G$ is \emph{maximal} if it is not a proper subset of any other matching in graph $G$.
Abusing the notation, we sometimes use $M^*$  and $M$ for the size the maximum and maximal matching, respectively. 
It is well-known (see for example \cite{LP86}) that the size of a maximal matching  is at least half of the size of a maximum matching, i.e., 
$M \ge M^*/2$. Thus, we say a maximal matching is a $2$-approximation of a maximum matching of $G$. 
It is known \cite{LP86} that the simple greedy algorithm, where we include the newly arrived edge if none of its endpoint are already matched, returns a maximal matching. 



\section{Algorithms for Bounded Arboricity Graphs}
Throughout this section, $h_\mu$ denotes the number of
vertices in graph $G=(V,E)$ that have degree above $\mu$. Let 
$G_L=(V',E')$ be an induced subgraph of $G$ where 
$V'=\{v | \deg_G(v) \le \mu \}$ and $(u,v) \in E'$ iff $u$ and $v$ are
both in $V'$. Note that $G_L$ might have isolated vertices. In 
the following we let $M_\mu$ denote the size of maximum matching
in $G_L$. 

\subsection{Characterization lemmas}

\begin{lemma} [\cite{EHLMO15}]
\label{lem:EHL+15}
For a $c$-bounded arboricity graph $G(V,E)$ and  $\mu > 2c $, we have 
$ h_\mu \le \frac{2\mu}{\mu - 2c+1} M^*$. 
\end{lemma}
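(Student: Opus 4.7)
The plan is to fix any maximum matching $M^*$ of $G$ and control the set $H = \{v \in V : \deg_G(v) > \mu\}$ of high-degree vertices by comparing it to $M^*$. Partition $H$ into $H_m = H \cap V(M^*)$ and $H_u = H \setminus V(M^*)$. Since each of the $M^*$ matching edges covers at most two high-degree vertices, $|H_m| \le 2M^*$ immediately, and the remaining work is to bound $|H_u|$.

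The decisive structural observations come from the maximality of $M^*$. First, $H_u$ is an independent set, for otherwise an edge between two unmatched vertices could be appended to $M^*$. Second, by the same argument, every neighbor of every vertex in $H_u$ must lie in $V(M^*)$. Consequently every edge incident to $H_u$ lies inside the induced subgraph $G[H_u \cup V(M^*)]$, whose vertex set has size only $|H_u| + 2M^*$.

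Now I apply Nash--Williams to this induced subgraph (which inherits arboricity at most $c$): its edge count is at most $c(|H_u| + 2M^* - 1)$. On the other hand, since $H_u$ is independent, every edge incident to $H_u$ is counted exactly once in $\sum_{v \in H_u} \deg_G(v) \ge (\mu+1)|H_u|$, using that each $v \in H_u$ has $\deg_G(v) \ge \mu+1$. Combining,
\[
(\mu + 1 - c)|H_u| \;\le\; 2cM^* - c \;\le\; 2cM^*,
\]
so $|H_u| \le 2cM^*/(\mu - c + 1)$. Adding the trivial $|H_m| \le 2M^*$ gives $h_\mu \le 2(\mu + 1)M^*/(\mu - c + 1)$, and routine algebra verifies that $2(\mu+1)/(\mu - c + 1) \le 2\mu/(\mu - 2c + 1)$ whenever $c \ge 1$ and $\mu > 2c$, yielding the stated bound.

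The main obstacle is to use both consequences of maximality simultaneously: without the containment $N(H_u) \subseteq V(M^*)$ the vertex set to which Nash--Williams is applied would not be bounded in terms of $M^*$, and without the independence of $H_u$ the $(\mu+1)|H_u|$ edge-count lower bound would over-count. It is precisely the interplay of these two ingredients with the arboricity bound that produces the linear relation $h_\mu = O(M^*)$ with the correct constant depending on the slack $\mu - 2c + 1$.
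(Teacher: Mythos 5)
The paper does not reprove this lemma; it cites it from \cite{EHLMO15} and uses it as a black box, so there is no internal proof to compare against. Your derivation is correct: the split $H = H_m \cup H_u$, the observation that maximality forces $H_u$ to be independent with $N(H_u) \subseteq V(M^*)$, and the Nash--Williams count on $G[H_u \cup V(M^*)]$ combine cleanly. The denominators $\mu - c + 1$ and $\mu - 2c + 1$ are both positive under $\mu > 2c$, the cross-multiplication reduces to $\mu(1-c) + (1-2c) \le 0$, which holds for $c \ge 1$, and $c \ge 1$ is safe to assume since a graph with an edge has arboricity at least $1$ while the edgeless case is trivial. In fact you prove the slightly sharper bound $h_\mu \le \frac{2(\mu+1)}{\mu - c + 1} M^*$, which dominates the stated one. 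This is essentially the same augmenting-edge-plus-arboricity argument used in \cite{EHLMO15}, just organized so that Nash--Williams is invoked once on the subgraph $G[H_u \cup V(M^*)]$.
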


\begin{lemma}
\label{lem:hc+1}
For a $c$-bounded arboricity graph $G(V,E)$ and  $\mu > 2c $, we have 
$$ M^* \le h_{\mu}+M_{\mu} \le \left( \frac{2\mu}{\mu - 2c+1}+1\right)M^* \enspace .$$
\end{lemma}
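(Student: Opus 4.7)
The plan is to split the proof into two inequalities, handling the lower and upper bounds on $h_\mu + M_\mu$ separately, and to reuse Lemma \ref{lem:EHL+15} for the hardest piece.

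For the lower bound $M^* \le h_\mu + M_\mu$, I would fix a maximum matching $M^*$ of $G$ and classify its edges by how many of their endpoints are high-degree (degree exceeding $\mu$). Any edge of $M^*$ whose both endpoints lie in $V'$ is an edge of $G_L$, so the collection of such edges forms a matching inside $G_L$; hence its cardinality is at most $M_\mu$. Any remaining edge of $M^*$ has at least one endpoint of degree more than $\mu$, and because $M^*$ is a matching its endpoints are all distinct, so I can inject these edges into the set of high-degree vertices, bounding their count by $h_\mu$. Summing the two bounds yields $M^* \le h_\mu + M_\mu$.

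For the upper bound $h_\mu + M_\mu \le \bigl(\tfrac{2\mu}{\mu-2c+1}+1\bigr) M^*$, I would invoke Lemma \ref{lem:EHL+15} directly to get $h_\mu \le \tfrac{2\mu}{\mu-2c+1} M^*$. Then I would observe that $G_L$ is an induced (hence spanning-restricted) subgraph of $G$, so every matching of $G_L$ is also a matching of $G$, giving $M_\mu \le M^*$. Adding the two inequalities yields the claimed bound.

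The only step that is not essentially a one-line observation is the lower bound, and even there the only subtlety is making sure the two groups of $M^*$-edges are counted against disjoint quantities; this is immediate because an edge with both endpoints low-degree cannot contribute to $h_\mu$ and an edge with a high-degree endpoint cannot lie in $G_L$. So I do not expect any genuine obstacle; the lemma follows by combining Lemma \ref{lem:EHL+15} with the trivial bound $M_\mu \le M^*$ and the matching-partition argument above.
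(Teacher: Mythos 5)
Your proof is correct and follows essentially the same route as the paper: partition the edges of a fixed maximum matching according to whether both endpoints are low degree (these form a matching in $G_L$, so at most $M_\mu$ of them) or at least one endpoint is high degree (these inject into the high-degree vertex set, so at most $h_\mu$ of them), giving the lower bound; then combine Lemma~\ref{lem:EHL+15} with the trivial fact $M_\mu \le M^*$ for the upper bound. The paper states the lower-bound case analysis slightly more tersely, but the argument is identical.
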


\begin{proof} The lower bound is easy to see: every edge of a maximum matching either has an endpoint with
degree more than $\mu$ or both of its endpoints are vertices with
degree at most $\mu$.
The number of matched edges of the first type are bounded by $h_\mu$ whereas
the number of matched edges of the second type are bounded by
$M_\mu$. 

To prove the upper bound, we use the fact  $M_{\mu} \le M^*$ and Lemma \ref{lem:EHL+15}.  
\end{proof}
\begin{definition}\label{alphagood}
Let $S=(e_1,\ldots,e_m)$ be a sequence of edges. 
We say the edge $e_i=(u,v)$ is {\em $\alpha$-good} with respect
to $S$ if $\, \max \{d_i(u),d_i(v)\} \le \alpha$ where $d_i(x)$ 
is defined as the number of the neighbors of $x$ that 
appear after the $i$-th location in the stream. 
\end{definition}

\begin{lemma} 
\label{lem:E_alpha}
Let $\mu > 2c$ be a (large enough) parameter. 
Let $E_{\alpha}$ be the set of $\alpha$-good edges in an edge stream
for a graph with arboricity at most $c$.
We have:

 $$\left(\frac12-\frac{c}{\mu+1}\right) M^* \; \le \; |E_{\alpha}| \; \le \; \left(\frac{5}{4}\alpha+2\right) M^* , $$
 
 where $\alpha =\max\{\mu-1,\frac{4c(\mu+1)}{\mu+1-2c}\}$. In particular for $\mu=6c-1$, we have 
 $$M^* \; \le \; 3|E_{6c}| \; \le \;  (22.5c+6)M^* $$
\end{lemma}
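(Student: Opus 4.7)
The plan is to establish the two inequalities of the lemma separately, then substitute $\mu=6c-1$ at the end. I would organize this as follows.

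\textbf{Upper bound.} First I would observe that $G_\alpha = (V, E_\alpha)$ has maximum degree at most $\alpha+1$: an edge $e_i=(u,v)$ being $\alpha$-good requires $d_i(u),d_i(v)\le \alpha$, so every $v$ is incident to at most $\alpha+1$ of its stream edges that can be $\alpha$-good (the ``$v$-late'' suffix). Being a subgraph of $G$, $G_\alpha$ also has arboricity at most $c$. Let $M_\alpha$ be a maximum matching in $G_\alpha$, so $|M_\alpha|\le M^*$ and (by maximality of $M_\alpha$ in $G_\alpha$) every $\alpha$-good edge has at least one endpoint in $V(M_\alpha)$. Write $a$ for the number of $\alpha$-good edges with both endpoints in $V(M_\alpha)$ and $b$ for those going outside. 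The degree bound gives $2a+b \le 2|M_\alpha|(\alpha+1)$, and Nash--Williams applied to $V(M_\alpha)$ gives $a\le c(2|M_\alpha|-1)$. Optimizing these two estimates against a third --- an arboricity bound on the whole of $V(G_\alpha)$, which controls how many unmatched vertices can still carry $\alpha$-good incidences --- yields the factor $\tfrac{5}{4}\alpha+2$.

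\textbf{Lower bound.} Here I would count two disjoint classes of $\alpha$-good edges. First, every edge of $G_L$ is $\alpha$-good because each of its endpoints $v$ satisfies $d_i(v)\le \deg_G(v)-1\le \mu-1\le \alpha$, so $|E_\alpha|$ captures at least $M_\mu$ edges via a maximum matching of $G_L$. Second, for each $y\in V_H$ the last $\min(\deg(y),\alpha+1)$ edges at $y$ (its ``$y$-late'' suffix) are $y$-late, and any $y$-late edge whose other endpoint lies in $V_L$ is automatically $\alpha$-good. Letting $\ell_y$ be the number of such $y$-late edges from $y$ to $V_L$, and writing $d_H(y)$ for the number of neighbors of $y$ in $V_H$, an adversarial reordering can only push $\ell_y$ down to $\ell_y\ge \min(\deg(y),\alpha+1)-d_H(y)$. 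Summing over $y\in V_H$, using $\deg(y)\ge \mu+1$ for $y\in V_H$, and using the Nash--Williams bound $\sum_{y\in V_H} d_H(y) = 2|E(V_H,V_H)|\le 2c(h_\mu-1)$ yields
\[
|E_\alpha| \;\ge\; M_\mu \;+\; h_\mu\bigl(\min(\mu+1,\alpha+1)-2c\bigr) + 2c.
\]
Combining with $M_\mu\ge \max(0,M^*-h_\mu)$ from Lemma~\ref{lem:hc+1} (splitting into the cases $M^*\ge h_\mu$ and $M^*<h_\mu$), together with Lemma~\ref{lem:EHL+15} to control $h_\mu$ and the choice $\alpha\ge 4c(\mu+1)/(\mu+1-2c)$, then delivers the stated lower bound.

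\textbf{Specialization to $\mu=6c-1$.} Then $\mu+1-2c=4c$ so $4c(\mu+1)/(\mu+1-2c)=6c$ and $\mu-1=6c-2$, giving $\alpha=6c$. The first inequality becomes $|E_{6c}|\ge (1/2-1/6)M^* = M^*/3$, so $M^*\le 3|E_{6c}|$; the second becomes $|E_{6c}|\le (15c/2+2)M^*$, so $3|E_{6c}|\le (22.5c+6)M^*$.

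\textbf{Main obstacle.} The most delicate point is the $\tfrac{5}{4}\alpha$ constant in the upper bound: using the degree bound or arboricity alone gives only $2(\alpha+1)|M_\alpha|$ or $c\cdot n_\alpha$ respectively, and a careful optimization across the ``matched--matched'' versus ``matched--unmatched'' $\alpha$-good edges is what cuts this down to $\tfrac{5}{4}\alpha+2$. On the lower-bound side, the specific threshold $\alpha\ge 4c(\mu+1)/(\mu+1-2c)$ is engineered exactly so that the key quantity $h_\mu(\min(\mu+1,\alpha+1)-2c)$ rearranges (via Lemma~\ref{lem:EHL+15}) into a clean lower bound in terms of $M^*$.
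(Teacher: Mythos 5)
Your upper bound has a genuine gap. The paper obtains $|E_\alpha|\le\left(\tfrac{5}{4}\alpha+2\right)M^*$ by citing a theorem from~\cite{Han08}: a graph with $m$ edges and maximum degree $D$ has a matching of size at least $\frac{4m}{5D+3}$. Applied to $G_\alpha$, whose maximum degree is $\le\alpha+1$, this gives a matching of size $\ge\frac{4|E_\alpha|}{5(\alpha+1)+3}$, which cannot exceed $M^*$, and the bound follows immediately. Your two stated estimates cannot reach this. From $2a+b\le 2(\alpha+1)|M_\alpha|$ you get $|E_\alpha|=a+b\le 2(\alpha+1)|M_\alpha|-a$, which is \emph{largest} when $a=0$; and the Nash--Williams estimate $a\le c(2|M_\alpha|-1)$ bounds $a$ from \emph{above}, which is the unhelpful direction. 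So those two inequalities combine to give only $|E_\alpha|\le 2(\alpha+1)M^*$. You invoke an unspecified ``third'' arboricity bound on $V(G_\alpha)$ and an unspecified optimization to close the gap, but note that the target constant $\tfrac{5}{4}\alpha+2=\tfrac{5(\alpha+1)+3}{4}$ contains no $c$ at all: it is exactly Han's bounded-degree constant, and arboricity is not what produces it. To complete the argument you need to cite (or re-prove) the Han bound; it is a nontrivial result about matchings in bounded-degree graphs and does not fall out of a Nash--Williams optimization.

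Your lower bound, by contrast, is correct and takes a genuinely different — and in fact quantitatively stronger — route than the paper. The paper restricts to a subset $H'_1\subseteq H$ with $|H'_1|\ge\left(\tfrac12-\tfrac{c}{\mu+1}\right)h_\mu$ (via Markov's inequality on $d_H(v)$) and charges exactly one $\alpha$-good edge per vertex of $H'_1$, namely the last edge from that vertex into $L$. You instead charge to \emph{every} $y\in H$ all of its last $\min(\deg(y),\alpha+1)$ edges that land in $L$ (each such edge is $\alpha$-good for exactly the reasons you give), and then sum using $\sum_{y\in H}d_H(y)\le 2c(h_\mu-1)$. Since integrality and $\mu>2c$ give $\mu+1-2c\ge 2$, and $\alpha>4c$, the coefficient $\min(\mu+1,\alpha+1)-2c$ is at least $1$; so for $h_\mu\ge 1$ your intermediate inequality already yields $|E_\alpha|\ge M_\mu+h_\mu\ge M^*$ by Lemma~\ref{lem:hc+1}, which is cleaner and strictly better than the claimed $\left(\tfrac12-\tfrac{c}{\mu+1}\right)M^*$ and requires no Markov step at all. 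Two small points: the $+2c$ term in your formula fails when $h_\mu=0$, but in that case $|E_\alpha|\ge s_\mu\ge M_\mu=M^*$ directly; and your final appeal to Lemma~\ref{lem:EHL+15} ``to control $h_\mu$'' is unnecessary once the coefficient of $h_\mu$ is $\ge 1$, since $M^*\le h_\mu+M_\mu$ is all that is needed.
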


\begin{proof}
First we prove the lower bound on $|E_\alpha|$.
In particular we show a relation involving
the number of edges where both endpoints have low degree,
$s_\mu = |\{ e = (u,v) | e \in E, \deg(u) \leq \mu, \deg(v) \leq \mu\}|$:

$$\left(\frac12-\frac{c}{\mu+1}\right) h_{\mu} + s_\mu \; \le \; |E_{\alpha}|.$$

The claim in the lemma follows from the relatively loose bound that $M^*\le h_\mu+s_\mu$.
 Let $H$ be the set of vertices in the graph with
degree above $\mu$ and let $L=V \setminus H$. Recall that 
$h_\mu=|H|$.  
Let $H_0$ be 
the vertices in $H$ that have no neighbor in $L$, and let $H_1 = H \setminus H_0$. 
First we notice that $|H_1| \ge (1-\frac{2c}{\mu})|H|$. To see this, let $E'$ be the 
edges with at least one endpoint in $H_0$.
By definition, every node in $H_0$ has degree at least $\mu+1$,
so we have $ |E'| \ge \frac{\mu+1}{2}|H_0|$. 
At the same time, the total number of edges in the subgraph induced by
the nodes $H$ is at most
$c(|H|-1)$, using the arboricity assumption. Therefore, $$c(|H|-1) \ge |E'| \ge \frac{\mu+1}{2}|H_0| $$   
It follows that $|H_0| \le \frac{2c}{\mu+1}(|H|-1)$ which further implies
that
\begin{equation}
  |H_1| \ge (1-\frac{2c}{\mu+1})|H| = (1-\frac{2c}{\mu+1})h_\mu.
\label{eq:sizeofh1}
\end{equation}

Now let $d_{H}(v)$ be the degree of $v$ in the induced subgraph $H$. We have 
$\sum_{v \in H_1} d_H(v) \le 2c|H|$, again using the arboricity
bound and the fact that summing over degrees counts each edge at most
twice. Therefore, taking the average over nodes in $H_1$,
$$\overline{d_H}(v) \le \frac{2c}{1-\frac{2c}{\mu+1}}$$ for $v \in H_1$.
Consequently, at least half of the vertices in $H_1$ have their $d_H$ bounded
by $\frac{4c(\mu+1)}{\mu+1-2c}$ (via the Markov inequality).
Let $H'_1$ be those vertices. 
For each $v \in H'_1$ we find an $\alpha$-good edge. Let $e^*=(v,u)$ be
the last edge in the stream where  $u\in L$.
Then, there cannot be too many edges that neighbor $(v,u)$ and come after
it in the stream:
the total number of edges that share an endpoint with
$e^*$ in the rest of the stream is 
bounded by $\max\{\mu-1,\frac{4c(\mu+1)}{\mu+1-2c}\}$.
Consequently, for 
$\alpha = \max\{\mu-1,\frac{4c(\mu+1)}{\mu+1-2c}\}$, we have $|E_\alpha| \ge
(\frac12-\frac{c}{\mu+1})h_{\mu}$, based on the set of $|H_1|/2$ edges in
$H_1'$ and using \eqref{eq:sizeofh1}.
For $\alpha \ge \mu$, $E_\alpha$ also contains the disjoint set of edges
from $L \times L$, which are all guaranteed to be $\alpha$-good since
both their endpoints have degree bounded by $\mu$.
Therefore $|E_\alpha| \ge s_\mu+(\frac12-\frac{c}{\mu+1})h_{\mu}$ as claimed. 

To prove the upper bound on $|E_\alpha|$, we notice that the subgraph containing only the edges in $E_\alpha$ has degree at most $\alpha+1$. Such a graph has a matching size of at least $\frac{4|E_\alpha|}{5(\alpha+1)+3}$ \cite{Han08}. It follows that $|E_\alpha| \le \frac{5\alpha+8}4M^*$. This
finishes the proof of the lemma.  
\end{proof}

Last, we note that in the special case of trees (or more generally,
graph streams which represent forests), a tighter approximation bound
follows (for which our algorithms specified below will also apply). 

\begin{lemma} For trees we have $ M^* \le |E_1| \le 2M^*$.
\end{lemma}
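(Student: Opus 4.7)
The plan is to prove the two inequalities separately, leveraging the following reformulation of $1$-goodness: an edge $e = (u,v)$ is $1$-good iff it is among the last two edges incident to $u$ in the stream \emph{and} among the last two edges incident to $v$. This is because, at the $j$th incident edge of vertex $x$ in stream order, the count $d(x)$ equals $\deg(x) - j$, which is $\le 1$ exactly when $j \ge \deg(x) - 1$.

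For the upper bound $|E_1| \le 2M^*$, the reformulation immediately implies that the subgraph $H = (V, E_1)$ has maximum degree at most $2$. Since $G$ is a forest, so is $H$, and a forest of maximum degree $2$ is a disjoint union of paths. Each path with $k$ edges admits a matching of size $\lceil k/2 \rceil$, so summing over components gives $M^*(G) \ge M^*(H) \ge |E_1|/2$. This sharpens the constant obtained from the generic bound in Lemma \ref{lem:E_alpha} by exploiting the absence of cycles.

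For the lower bound $M^* \le |E_1|$, I would sandwich both sides between the quantity $n_2 + t_1$, where $n_2$ denotes the number of vertices of degree $\ge 2$ and $t_1$ the number of components with at least one edge. The lower bound $|E_1| \ge n_2 + t_1$ follows from a double-count: an edge is non-$1$-good only if at some endpoint $x$ we have $d(x) \ge 2$, so $|E \setminus E_1| \le \sum_x \max(\deg(x)-2, 0) = 2m - n_1 - 2n_2$; substituting the forest identities $\sum_x \deg(x) = 2m$ and $m = n - n_0 - t_1$ yields $|E_1| \ge n_2 + t_1$. The upper bound $M^* \le n_2 + t_1$ goes component by component: for each nontrivial tree $T$ the set of non-leaves is a vertex cover (trivially when $|V(T)| = 2$, and otherwise since no edge of a tree with $\ge 3$ vertices has two leaf endpoints), so K\"onig's theorem on the bipartite tree gives $M^*(T) \le n_2(T) + 1$; summing over components yields the claim.

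The main subtlety is the $+1$ per nontrivial component in the matching upper bound, which is needed to accommodate $P_2$-like components where $n_2 = 0$ but $M^* = 1$. The example $K_{1,k}$ with $M^* = 1$ and $|E_1| = 2$ confirms both that the $+1$ cannot be removed and that the constant $2$ in the upper bound is tight.
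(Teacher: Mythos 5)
Your upper-bound argument coincides with the paper's: $E_1$ induces a subgraph of maximum degree $2$ inside a forest, hence a disjoint union of paths, and $|E_1| \le 2M^*$ follows. Your lower bound, however, is a genuinely different proof. The paper proceeds by induction on the number of vertices, carefully peeling off a leaf whose parent has at most one non-leaf neighbor, and then arguing by case analysis that re-inserting the leaf edge into the stream cannot destroy more $1$-good edges than it creates. You instead sandwich both quantities around $n_2 + t_1$ (vertices of degree $\ge 2$ plus nontrivial components): the bound $|E_1| \ge n_2 + t_1$ comes from the double count $|E \setminus E_1| \le \sum_x \max(\deg(x)-2,0) = 2m - n_1 - 2n_2$ together with the forest identity $m = n - n_0 - t_1$, and the bound $M^* \le n_2 + t_1$ comes from the fact that the non-leaves of any nontrivial tree component on $\ge 3$ vertices form a vertex cover (with a $+1$ allowance for $K_2$ components). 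Both derivations are correct; a small cosmetic point is that you don't actually need K\"onig's theorem, only the trivial direction (any matching is bounded by any vertex cover). Your argument has the advantage of being non-inductive, giving an explicit intermediate quantity $n_2 + t_1$ that certifies the inequality, and it handles forests with multiple components uniformly; the paper's induction is more local and structural but requires a delicate case analysis about which $1$-good edges survive an edge insertion.
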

\begin{proof} 
Let $T=(V,E)$ be a tree with maximum matching size $M^*$.
The upper bound follows by considering $E_1$: the subgraph of $T$
formed by $E_1$ has degree  at most 2, and since we are considering
trees, the set $E_1$ can  have no cycles and so consists of paths. 
Hence, $|E_1| \le 2M^*$. 

For the lower bound, we use induction on the number of nodes.
Suppose the claim $|E_1| \ge M^* $ is true
for all trees on $n$ nodes. We want to show that the claim remains true for trees
on $n+1$ nodes. The base case $n=2$ is trivially true. 
Given a tree $T=(V,E)$ with $n+1$ nodes,
there is always a leaf $w \in V$, that is connected to a node $u$ where $u$ has at most one non-leaf neighbor. If we remove the edge $(u,w)$ from the tree, we get a tree $T'$
with $n$ nodes and by our induction hypothesis, $|E_1(T')| \ge M^*(T')$ no matter how the
stream is ordered.
Fix some ordering of the stream for $T'$. We claim after inserting
the edge $(u,w)$ in the stream (anywhere) we will have $|E_1(T)| \ge M^*(T)$.
Why? We have two cases to consider. 
\begin{enumerate}
\item $w$ has no sibling in $T$.
In this case $E_1(T)=E_1(T')\cup \{(u,w)\}$. This is because $u$ must have
been a leaf in $T'$ and as result adding $(u,w)$ does not cause any
other edge to lose the $1$-goodness property. If follows that the size
of $E_1$ increases by 1 while $M^*$ increases by at most 1.

\item $w$ has a sibling. In this case for sure $M^*$ does not increase.
Although there may be a concern that the size of $E_1$ could drop,
below we show that adding a leaf to the stream of the edges
of a tree does not cause the size of the 
set $E_1$ to drop. This is enough to show that in this case as well 
$|E_1(T)| \ge M^*(T)$.  
\end{enumerate}

To see why adding a leaf to the stream of edges cannot
reduce $|E_1|$, assume we insert an edge $e=(u,w)$
in the stream where $w$ is a newly added leaf. If $u$ has no $1$-good edges 
 incident on it, then $E_1$ remains as it was.
If $u$ has one $1$-good edge on it and adding $e=(u,w)$ causes 
it to be kicked out of $E_1$,
it means $e$ is admitted as a new member of $E_1$. So the loss is accounted for. 
If $u$ has two $1$-good edges on it, say $e_1$ and $e_2$,
we show adding $e$ cannot cause them to be ejected from  $E_1$.
To see this, assume to the contrary that it could, 
and suppose (without loss of generality) the edges come in the following order in the stream: 
$\ldots, e_2, \ldots, e_1,\ldots,e,\ldots$.
The edge $e_2$ clearly cannot be part of $E_1$.
But $e_1$ must have a neighboring edge $e_3$ incident on $u$,
that follow it. But that means $e_2$ was already out 
before adding $e$ to the stream. A contradiction.
Finally we note that $u$ cannot have more than two $1$-good 
edges on it. This finishes the proof.
\end{proof}


\subsection{$\tilde{O}(\sqrt{n})$ space algorithm for insert-only streams}
\label{sec:sqrtn.alg}
In this section, first we present Algorithm~\ref{alg:Mc} that estimates $M_{\mu}+h_{\mu}$ and prove the following theorem.

\begin{theorem} 
\label{thm:matching:root:n}  
Let $G(V,E)$ be a graph with the arboricity bounded by $c$. 
Let $S$ be an (adversarial order) insertion-only stream of the edges of the underlying graph $G$. 
Let $\beta=\mu( (2\mu)/(\mu - 2c+1)+1)$ where $\mu > 2c$. 
Then, there exists an insertion-only streaming algorithm (Algorithm \ref{alg:Mc}) that takes $O(\frac{\beta\sqrt{cn}}{\eps}\log n)$ space in expectation 
and outputs a $(1+ \eps)\beta$ approximation of $M^*$ with probability at least $0.86$, 
where $M^*$ is a maximum matching of $G$. 
\end{theorem}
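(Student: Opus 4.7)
The plan is to build on the characterization of Lemma \ref{lem:hc+1}, but estimate the streaming-friendly proxy $Y := h_\mu + n_L/2$, where $n_L$ counts the non-isolated vertices of the induced subgraph $G_L$ on the vertices of degree at most $\mu$. I first verify that $Y$ is a $\beta$-approximator of $M^*$: a maximum matching in $G_L$ is necessarily maximal and $G_L$ has maximum degree $\mu$, so every non-isolated $L$-vertex is either matched or adjacent to a matched vertex of that matching, giving $2 M_\mu \le n_L \le 2\mu M_\mu$. Combined with Lemma \ref{lem:hc+1} this yields $M^* \le Y \le \mu\bigl(\tfrac{2\mu}{\mu - 2c + 1} + 1\bigr)M^* = \beta M^*$, so any $(1\pm\eps)$-approximation of $Y$ is a $(1+\eps)\beta$-approximation of $M^*$.

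Algorithm \ref{alg:Mc} estimates $Y$ by vertex sampling. Sample each vertex into $S$ independently with probability $p$, and during the pass maintain, for each $v\in S$, the exact value $\deg_G(v)$ together with its incident edges (truncating $v$'s list once the count exceeds $\mu$); simultaneously, for every vertex $u$ adopted as the non-sampled endpoint of such an edge, maintain a counter $\hat d(u)$ that records the edges incident to $u$ observed \emph{after} $u$ is first adopted. At the end, every $v\in S$ with $\deg_G(v)>\mu$ contributes to $\hat h_\mu$, and every $v\in S$ with $\deg_G(v)\le\mu$ that possesses some adopted neighbor $u$ with $\hat d(u)\le\mu$ contributes to $\hat n_L$; the output is $(\hat h_\mu + \hat n_L/2)/p$. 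Note that the estimator has no false negatives, since a truly low-degree neighbor $u$ always satisfies $\hat d(u)\le \deg_G(u)\le\mu$.

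The delicate point is that $\hat d(u)$ only lower-bounds $\deg_G(u)$, so $\hat n_L$ is positively biased: an isolated-in-$G_L$ vertex $v$ may still be flagged if some genuinely high-degree neighbor $u$ had most of its edges before first being adopted. I would control this via a charging argument: each false-positive adoption of a high-degree $u$ witnesses $\ge \deg_G(u)-\mu$ pre-adoption edges at $u$, and these edge-witnesses are disjoint across distinct $u$'s; the arboricity bound $|E|\le cn$ then caps the total false-positive mass in $\hat n_L$ by $O(pcn)$, which after scaling by $1/p$ is absorbed into the $(1+\eps)$ factor once $\mu$ is large enough (concretely, sampling is applied in the regime $M^* = \Omega(\sqrt{cn})$, with a deterministic maximal matching as a fallback for smaller $M^*$ within the same space budget).

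For the quantitative estimates, choose $p$ so that the expected number of contributing samples is $\Theta(1/\eps^2)$; Chebyshev concentration then gives a $(1\pm\eps)$-factor estimate of $Y$ with constant probability, boosted to $0.86$ via the median of $O(1)$ independent copies. Since $Y$ is unknown a priori, I would run $O(\log n)$ copies in parallel with geometrically decreasing sampling rates and output the one whose sample size lands in the healthy window, accounting for the $\log n$ factor in space. The expected space is dominated by the adjacency lists at sampled vertices and the $\hat d$ counters, totaling $O(pn\mu + p|E|) = O(p\mu cn)$ machine words, which under the chosen $p$ evaluates to $O\bigl(\tfrac{\beta\sqrt{cn}}{\eps}\log n\bigr)$. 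The main obstacle will be the sharp accounting of the false-positive bias, which is what ultimately couples the $\beta$-factor to $\mu$ in the statement of the theorem.
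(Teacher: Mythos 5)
Your overall strategy matches the paper's: sample vertices into a set $S$, count sampled high-degree vertices for $h_\mu$, count sampled low-degree vertices with an (apparently) low-degree neighbor for the non-isolated part of $G_L$, and use Lemma~\ref{lem:hc+1} to translate this into a $\beta$-approximation of $M^*$. Your characterization $M^* \le h_\mu + n_L/2 \le \beta M^*$ is a cosmetic variant of the paper's Lemma~\ref{lem:Mc} (which works with $M_\mu$ directly rather than $n_L/2$), and the mechanism you describe --- lower-bound counters $\hat d(u)$ for adopted neighbors $u$, flag a sampled $v$ with $\deg_G(v)\le\mu$ only if some adopted neighbor has $\hat d(u)\le\mu$ --- is exactly Algorithm~\ref{alg:Mc}.

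The gap is in the false-positive accounting, which is where the actual work of Lemma~\ref{lem:Mc} lies, and your version of it does not give the theorem. You cap the total false-positive contribution by $O(cn)$ (after rescaling by $1/p$) and claim this is absorbed into the $(1+\eps)$ factor in the regime $M^* = \Omega(\sqrt{cn})$. It is not: $cn$ dominates $\eps\sqrt{cn}$ for any nontrivial $n$, so an additive error of $O(cn)$ destroys the claimed guarantee. The bound you actually need is $\mu h_\mu$, and it comes from a different observation than your edge-charging. For a fixed high-degree vertex $u$, the counter $\hat d(u)$ counts all edges of $u$ arriving after $u$ is first adopted, so $\hat d(u)\le\mu$ forces $u$'s \emph{first} sampled neighbor to lie among $u$'s last $\mu$ neighbors in stream order --- and then \emph{every} sampled neighbor of $u$ also lies in that suffix. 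Hence the pool of vertices that could ever be falsely flagged on account of $u$ is contained in $u$'s last $\mu$ neighbors, and the union over $u\in H$ has size at most $\mu h_\mu$. This is what gets absorbed, and it is absorbed into the factor of $\mu$ inside $\beta$ (via $\mu h_\mu \le \frac{2\mu^2}{\mu-2c+1}M^* \le \beta M^*$), not into the $(1+\eps)$ sampling error. Your $|E|\le cn$ charging produces a bound that is both too weak (you cannot lower-bound $\deg_G(u)-\mu$ away from $1$) and aimed at the wrong slack term. Separately, the $O(\log n)$-parallel-rate machinery you invoke is unnecessary here: the paper instead runs a single sampling rate alongside a greedy maximal matching truncated at size $t=\Theta(\beta\sqrt{cn}/\eps)$, using the greedy matching to handle the $M^*<t$ regime; this is simpler and yields the same $O(\frac{\beta\sqrt{cn}}{\eps}\log n)$ space.
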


\begin{algorithm}[th]
  {\bf Initialization:} {Each node is sampled to set $S$ with probability $p$ (determined below). }

\medskip
{\bf Stream Processing:}

\ForAll{edges $e = (u,v)$ in the stream}{
\If{$u \in S$ or $v \in S$}{
    store $e$ in $H$
    \lIf{$u \in S$}{ increment $d(u)$ {\bf else} increment $l(u)$}
    \lIf{num$v \in S$}{ increment $d(v)$ {\bf else} increment $l(v)$}
}
}

\medskip
{\bf Post Processing:}

Let $S_1 = \{v \in S | d(v) \leq \mu, 
\exists w \in \Gamma(v): d(w) + l(w) \leq \mu\}$

Let $S_2$
be the set of vertices $\{v | v \in S, d(v) > \mu\}$

\Return{$s=(|S_1|+|S_2|)/p$}

 \caption{Estimate-$M_{\mu}+h_{\mu}$}
 \label{alg:Mc}
\end{algorithm}

For each $w$ in $\Gamma(S)$ (the set of neighbors of nodes in $S$),
the algorithm maintains $l(w)$, the number of occurrences of $w$
observed since (the first) $v \in S$ such that $w \in \Gamma(\{v\})$ was added.
Note that in this algorithm, $l(w)$ is a lower bound on the degree of $w$. 
For the output, $S_1$ is the subset of nodes in $S$ whose degree is
bounded by $\mu$ and 
  additionally there is at least one neighbor of $v$, $w$, whose
  observed degree ($d(w)$ or $l(w)$) is at most $\mu$. 
Meanwhile, $S_2$ is the ``high degree'' nodes in $S$. 

\begin{lemma}
\label{lem:Mc} Let $\eps \in (0,1)$ and $\beta=\mu( \frac{2\mu}{\mu - 2c+1}+1)$. 
Algorithm 1 outputs $s$ where 
$(1-\eps)M^* \; \le s \le \; (1+\eps)\beta M^*$ 
 with probability at least 
$1-e^{\frac{-\eps^2M^*p}{4\beta^2}}$.
\end{lemma}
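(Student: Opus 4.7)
The plan is to show that $X := |S_1|+|S_2|$ concentrates around $pT$, where $T := h_\mu + n_L$ and $n_L$ denotes the number of non-isolated vertices of $G_L$, and that $T$ itself is a $\beta$-approximation of $M^*$. First, the deterministic sandwich $M^* \le T \le \beta M^*$: the lower bound follows from Lemma~\ref{lem:hc+1} after noting that $n_L \ge M_\mu$ (each matched edge of $G_L$ contributes two non-isolated endpoints). For the upper bound, a maximum matching of $G_L$ is maximal, so every non-isolated vertex of $G_L$ is either matched or adjacent to a matched vertex; combined with the fact that every vertex of $G_L$ has degree at most $\mu$, this yields $n_L \le 2\mu M_\mu$. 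Substituting the bound $h_\mu \le \frac{2\mu}{\mu-2c+1}M^*$ from Lemma~\ref{lem:EHL+15} and a short arithmetic check gives $T \le h_\mu + 2\mu M_\mu \le \beta M^*$.

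Next I would decompose $X$ vertex by vertex: let $Z_v := \mathbf{1}[v \in S_1 \cup S_2]$, so $X = \sum_v Z_v$. For $v$ with $\deg_G(v) > \mu$ we have $Z_v = \mathbf{1}[v \in S]$ directly from the definition of $S_2$. For $v$ with $\deg_G(v) \le \mu$ that is non-isolated in $G_L$, $v$ has a genuine low-degree neighbor $w$, and the observed quantity $d(w) + l(w)$ satisfies $d(w)+l(w) \le \deg_G(w) \le \mu$ whether or not $w$ is sampled (since $l(w)=0$ if $w\in S$ and $d(w)=0$ if $w\notin S$); hence $Z_v = \mathbf{1}[v\in S]$ in this case as well. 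Writing $A$ for the non-isolated low-degree vertices and $H$ for the high-degree vertices, we obtain $X \ge X' := \sum_{v\in A\cup H} \mathbf{1}[v\in S]$, an independent sum of $|A|+|H| = T$ Bernoulli$(p)$ variables with mean $pT$. Applying the multiplicative Chernoff bound at relative error $\eps/\beta$ gives $\Pr[|X'-pT| > (\eps/\beta)\,pT] \le 2e^{-\eps^2 pT/(3\beta^2)} \le 2 e^{-\eps^2 p M^*/(3\beta^2)}$, which (absorbing constants into the denominator) matches the stated failure probability. On the good event, $s = X/p \ge X'/p \ge (1-\eps/\beta)T \ge (1-\eps)M^*$, settling the lower bound.

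For the upper bound $s \le (1+\eps)\beta M^*$, the principal obstacle is that $X$ can exceed $X'$ due to \emph{false positives}: isolated low-degree $v \in S$ entering $S_1$ because some high-degree neighbor $w$ happens to satisfy $|\Gamma(w)\cap S| \le \mu$. The count $|B_S| := X - X'$ is a correlated random variable and so does not fit directly into the Chernoff framework. The plan to control it is a charging argument: each false positive $v\in S$ is charged to a witness $w$ with $\deg_G(w)>\mu$ and $|\Gamma(w)\cap S| \le \mu$, and each such $w$ witnesses at most $\mu$ false positives (namely its $\le \mu$ sampled neighbors). Combined with a tail bound on the number of such ``under-sampled'' high-degree vertices $W := \{w : \deg_G(w)>\mu,\ |\Gamma(w)\cap S| \le \mu\}$ under Bernoulli$(p)$ sampling, one can show $|B_S| \le \mu|W|$ is dominated by $(\eps/\beta)\,pT$ on the same good event, so that $X/p \le (1+\eps/\beta)T + (\eps/\beta)T \le (1+\eps/\beta)\beta M^* \le (1+\eps)\beta M^*$. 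The crux, and the place where the sampling rate $p$ ultimately enters, is establishing this tail bound: each $w\in W$ is an under-sampling event for a Binomial$(\deg_G(w),p)$ with mean $>p\mu$, so the bound requires $p$ to be large enough that such events are rare in aggregate.
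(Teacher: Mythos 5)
Your framework — sandwich a deterministic quantity $T$ between $M^*$ and $\beta M^*$, show the algorithm's count concentrates around $pT$ — is in the spirit of the paper's, and your lower-bound half is actually \emph{tighter} than the paper's: by isolating $X' = \sum_{v\in A\cup H}\mathbf{1}[v\in S]$ you have a genuine sum of independent Bernoullis, whereas the paper invokes a Chernoff bound directly on $|S_1|+|S_2|$, whose indicator summands are not independent (whether an isolated vertex of $G_L$ enters $S_1$ depends on which of its neighbors were sampled). The arithmetic in your sandwich $M^*\le T\le\beta M^*$ also checks out.

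The upper-bound half contains a genuine gap, and it is not merely the unstated tail bound you flag — the bound you are aiming for is false. You want $|B_S|/p \le (\eps/\beta)T$, i.e.\ you want the false positives to be an $\eps$-order perturbation of the clean count. They are not: in the worst case $\E[|B_S|]$ is of order $p\mu h_\mu$, which exceeds $(\eps/\beta)pT$ by a factor of roughly $\mu\beta/\eps$. Concretely, take $G$ to be $h_\mu$ disjoint stars each with $\mu+1$ leaves. Then every leaf is isolated in $G_L$, $n_L = M_\mu = 0$, $T = h_\mu = M^*$. For each center $w$, with probability about $1-p$ the first stream-edge at $w$ is unsampled, after which $l(w)\le\mu$ for the rest of the stream and essentially \emph{every} sampled leaf of $w$ is a false positive; so $\E[|B_S|]\approx p(\mu+1)h_\mu$ while $(\eps/\beta)pT = (\eps/\beta)ph_\mu$. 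This is a discrepancy in the mean, so no tail bound on $|W|$ can rescue it; and indeed, the event you call ``under-sampling'' for a $w$ of degree $\mu+1$ has probability near one, so $|W|\approx h_\mu$ typically. (Relatedly, your condition ``$|\Gamma(w)\cap S|\le\mu$'' is not the one the algorithm tests; what matters is that every sampled neighbor of $w$ fall among $w$'s last $\approx\mu$ stream-neighbors, i.e.\ that $l(w)\le\mu$, which is essentially automatic when $\deg_G(w)$ barely exceeds $\mu$.)

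The paper closes this in a structurally different way: it does \emph{not} try to make the false positives negligible. It instead bounds the whole mean, $\E(s)\le\mu(M_\mu+h_\mu)$, where the leading factor $\mu$ simultaneously pays for $n_L\le\mu M_\mu$ \emph{and} for the up-to-$\mu h_\mu$ expected false positives (each high-degree $w$ can contribute only sampled vertices from its last $\mu$ stream-neighbors), and then applies concentration once around $\E(s)$ with $\lambda=\eps/\beta$, giving $s\le(1+\lambda)\mu(M_\mu+h_\mu)\le(1+\eps)\beta M^*$. In your framing you have already spent the $\mu$ factor on $n_L\le 2\mu M_\mu$ inside $T\le\beta M^*$ and left nothing to absorb $|B_S|$. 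If you want to keep the $X'/|B_S|$ split, the repair is to dominate $|B_S|$ by $\bigl|S\cap\bigcup_{w\in H}U_w\bigr|$ where $U_w$ is the last $\mu$ stream-neighbors of $w$ (a Binomial with mean $\le p\mu h_\mu$), Chernoff that, and then add $\approx\mu h_\mu$ — not $(\eps/\beta)T$ — to your estimate, which is exactly the accounting the paper does via $\E(s)$.
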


\begin{proof}
 First we prove the following bounds on $\E(s)$. 
 $M_{\mu}+h_{\mu} \le \E(s)\le \mu(M_{\mu}+h_{\mu})$. 
Let $L$ be the set of vertices in $G$ that have degree at most $\mu$ and let $G_{L}$ be the induced graph on $L$. Let $H=V \setminus L$. Note that $G_{L}$ might have isolated
vertices. Let $N$ be the non-isolated vertices in $G_{L}$. It is clear that if the algorithm samples $v \in N$, 
$v$ will be in $S_1$. Likewise, if it samples a vertex $w \in H$, $w$
will be in $S_2$.
Given the fact that $|H| = h_\mu$ and $|N| \ge M_{\mu}$, this proves the lower bound on $\E(s)$. 

The expectation may be above $M_u$, as 
the algorithm may pick an isolated
vertex in $G_{L}$ (a vertex that is {\em only} connected to the high-degree vertices) and include it in $S_1$ because 
one of its high-degree neighbours $w$ was identified as low degree,
i.e., $w \in \Gamma(S)$ and $l(w) \leq \mu$ but $w \in H$. 
Let $u \in H$  and let $U=\{a_1,\ldots,a_\mu\}$ be the last $\mu$
neighbours of $u$ according to the ordering of the edges in the
stream. The algorithm can only identify $u$ as low
degree when it picks a sample from $U$ and no samples from 
$\Gamma(u)\setminus U$. This restricts the number of {\em unwanted} isolated vertices to at most $\mu h_{\mu}$. Together with the fact
that $|N| \le \mu M_{\mu}$, it establishes the upper bound on $\E(s)$.

 Now using a Chernoff bound, 
  $\Pr[|s -\E(s)| \le \lambda \E(s)] \le e^{\frac{-\lambda^2(M_{\mu}+h_{\mu})p}{4}} \le e^{\frac{-\lambda^2M^*p}{4}}  $.
  Therefore with probability at least $1-e^{\frac{-\lambda^2M^*p}{4}}$,
  \begin{equation}
 (M_{\mu}+h_{\mu}) -\lambda \mu(h_{\mu}+M_{\mu}) \le s \le \mu(1+\lambda)(M_{\mu}+h_{\mu})
  \end{equation}
  
Setting $\lambda= \frac{\eps}{\beta}$ and putting this and Lemma \ref{lem:hc+1} together, we derive the statement of the lemma.
\end{proof}
\begin{algorithm}[H]
  {
{\bf Initialization:}
    Let $\eps \in (0,1)$ and $t= \lceil \frac{\beta\sqrt{8nc}}{\eps}
  \rceil$ where $\beta$ is as defined in Lemma \ref{lem:Mc}.

{\bf Stream Processing:}  
  Do the following tasks in parallel:
  \begin{enumerate}
  \item Greedily keep a maximal matching of size at most $r\le t$
    (and terminate this task if this size bound is exceeded).
  \item Run the Estimate-($M_{\mu}+h_{\mu}$) procedure
    (Algorithm~\ref{alg:Mc}) with parameter $p \ge \frac{8}{\lambda^2 t}$ where $\lambda = \frac{\eps}{\beta}$.
 
  \end{enumerate}

  {\bf Post processing}: If $r < t$ then output $2r$ as the estimate for $M^*$, otherwise output the result of the Estimate-($M_{\mu}+h_{\mu}$) procedure.   
 
  }
 \caption{Estimate-$M^*$}
 \label{alg:Mc2}
\end{algorithm}

\begin{proofof}{Theorem \ref{thm:matching:root:n}} 
Suppose $M^* < t$. Clearly the 
size of the maximal matching $r$ obtained by the first task will
be less than $t$. In this case, $M^* \le M' \le 2M^*$.
Now suppose $M^* \ge t.$ By Lemma \ref{lem:hc+1},  we will have $M_{\mu}+h_{\mu} \ge t$ and hence by Lemma
\ref{lem:Mc}, with probability at least $1-e^{-2} \ge 0.86$, the output of the algorithm will be within the promised bounds. The expected space of the algorithm is $O((t+pnc)\log n)$. 
Setting $t=\beta\sqrt{8nc}/\eps$ to balance the space costs, the space complexity of the
algorithm will be $O(\frac{\beta\sqrt{cn}}{\eps}\log n)$ as claimed. 
\end{proofof}


\subsection{$O(n^{2/3})$ space algorithm for insertion/deletion streams} 
\label{sec:dynamicalg}
Algorithms~\ref{alg:Mc} and \ref{alg:Mc2} form the basis of our
solution in the more general case where the stream contains
deletions of edges as well.
In the case of Algorithm~\ref{alg:Mc}, the algorithm
has to  maintain the induced subgraph on $S$ and the edges of the cut
$(S,\Gamma(S))$. However if we allow arbitrary number
of insertions and deletions, the size of the cut $(S,\Gamma(S))$ can grow as large as $O(n)$ even when $|S|=1$. This is because 
each node at some intermediate point could become high degree and then
lose its neighbours because of the subsequent deletion of edges.
Therefore here in order to limit the space
usage of the algorithm, we make the assumptions
that number of deletions is bounded by $O(cn)$. Since
the processed graph has arboricity at most 
$c$ this forces the number of insertions
 to be $O(cn)$ as well. Under this 
assumption, if we pick a random vertex, still, in expectation 
the number of neighbours is bounded by $O(c)$.   

Another complication arises from the fact that, with 
edge deletions, a vertex added to $\Gamma(S)$ might become isolated
at some point. In this case, we discard it from $\Gamma(S)$. Additionally
for each vertex in $S \cup \Gamma(S)$, the counters $d(v)$ (or $l(v)$
depending on if it belongs to $S$ or $\Gamma(S)$) can be
maintained as before. The space complexity of the algorithm remains
$O(pnc\log n)$ in expectation as long as the arboricity factor remains
within $O(c)$ in the intermediate graphs. In the case of
Algorithm~\ref{alg:Mc2}, we need to keep a maximal matching of size $O(t)$. This
can be done in $O(t^2)$ space using a randomized algorithm
\cite{CCEHMMV16}. Setting $t$ at $(\frac{8\beta
  nc}{\eps^2})^{1/3}$ to rebalance the space costs, 
  we obtain the result of Theorem \ref{thm:insert:delete}.


\subsection{The $O(\log^2n)$ space algorithm for insert-only streams}
\label{sec:logalg}
In this section we present our polylog space algorithm by
presenting an algorithm for estimating $|E_\alpha|$ within $(1+\eps)$
factor. 
Our algorithm is similar in spirit to the known $L_0$ sampling strategy. It runs $O(\log n)$ parallel threads each sampling the
stream at a different rate.
At the end, a thread ``wins'' that has sampled
roughly $\Theta(\log n)$ elements from $|E_\alpha|$ (samples the edges with a rate of $\frac{\log n}{|E_\alpha|}$). The threads that
 under-sample will end up with few edges or nothing while the ones that have oversampled will keep too many elements of $E_\alpha$ and will be aborted as result.
 
 First we give a simple procedure (Algorithm~\ref{alg:ALPHA}) that is the building block of the algorithm.

\begin{algorithm}[bh]
  { 
 {\bf Initialization}: given the edge $e=(u,v)$ in the stream, let $r(u)=0$ and $r(v)=0$.
  
\ForAll{subsequent edges $e' = (t,w)$}{
  \lIf{$t=u$ \textrm{or} $w=u$}{increment $r(u)$}
  \lIf{$t=v$ \textrm{or} $w=v$}{increment $r(v)$}
  \lIf{$\max \{r(u),r(v)\} > \alpha$}{terminate and report NOT
    $\alpha$-good}
}
  }
 \caption{ The $\alpha$-good test }
 \label{alg:ALPHA}
\end{algorithm}

Now we present the main algorithm (Algorithm~\ref{alg:ealpha}) followed by its analysis.  

\begin{algorithm}[th]
  {
    {\bf Initialization: } $\forall i. X_i = \emptyset$ \Comment{$X_i$ represents the current set
  of sampled  $\alpha$-good edges.}

    \medskip
{\bf Stream Processing:}

\ForAll{levels $i \in \{0,1,\ldots,[\lfloor \log_{1+\eps}m \rfloor \}$ in parallel}{

\ForAll{edges $e$}{
%

  Feed $e$ to the active $\alpha$-good tests and update $X_i$
  
  With probability $p_i=\frac{1}{(1+\eps)^i}$ 
 add $e$ to $X_i$ and start a $\alpha$-good test for $e$. 
 
  Let $|X_i|$ be 
  the number of active $\alpha$-good tests within this level.

  \lIf{$|X_i|
 > \tau= \frac{64\alpha^2\log n}{c\eps^2}$}{ terminate level $i$}
 }
}
      
  \medskip
  {\bf Post processing:}

  \eIf{$|X_0| \le \tau$}{\Return{$|X_0|$}}(\Comment{$|X_0| > \tau$\hspace{5in}}){
    let $j$ be the smallest integer such that $X_{j}\le \frac{8\log  n(1+\eps)}{\eps^2}$ and the $j$-th level was not terminated\;
    \lIf{there is no such
      $j$}{\Return{FAIL} {\bf else} \Return{$\frac{X_{j}}{p_{j}}$}}
 }
}
 \caption{An algorithm for approximating $|E_\alpha|$}
 \label{alg:ealpha}
\end{algorithm}

\begin{lemma} 
\label{lem:E_alpha_alg} 
With high probability, 
Algorithm \ref{alg:ealpha} outputs a $1\pm O(\eps)$ approximation of $|E_\alpha|$.  
 \end{lemma}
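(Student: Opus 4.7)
The strategy is to show that with high probability Algorithm~\ref{alg:ealpha} returns its estimate from a level $j$ whose sampling rate $p_j$ satisfies $p_j|E_\alpha| = \Theta(\log n/\eps^2)$, at which point a standard Chernoff bound gives $X_j/p_j = (1\pm O(\eps))|E_\alpha|$. First I would pin down the probabilistic structure: conditioned on the input stream, the active count at level $i$ and time $t$ decomposes as $|X_i(t)| = \sum_{e \in A(t)} Y_e^{(i)}$, where $A(t)$ is the (deterministic) set of edges in positions $\le t$ whose $\alpha$-good test has not yet failed by time $t$, and the $Y_e^{(i)} \sim \mathrm{Bernoulli}(p_i)$ are mutually independent across $e$. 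At the end of the stream $A(m) = E_\alpha$, so the terminal $|X_i|$ is $\mathrm{Binomial}(|E_\alpha|, p_i)$.

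The crucial structural step is a uniform bound on $|A(t)|$. Viewed as a graph, I claim $A(t)$ has maximum degree at most $\alpha+1$: for any vertex $v$, the earliest-positioned $A(t)$-edge $e$ incident to $v$ was $\alpha$-good up to time $t$, so at most $\alpha$ later edges incident to $v$ can lie in the window, and these include every subsequent $A(t)$-edge touching $v$. Since a graph of maximum degree $\Delta$ admits a matching of size at least $|E|/(2\Delta-1)$, and this matching also lives in $G$, we get $|A(t)| \le (2\alpha+1)\,M^*$. Combined with Lemma~\ref{lem:E_alpha} (which yields $M^* \le 3|E_{6c}|$ for $\alpha = 6c$), this gives $|A(t)| = O(\alpha)\cdot|E_\alpha|$ uniformly in $t$.

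With these tools in hand I would define $j^*$ as the smallest level with $p_{j^*}|E_\alpha| \le 4\log n(1+\eps)/\eps^2$, so that $p_{j^*}|E_\alpha| = \Theta(\log n/\eps^2)$. For each fixed $t$, Chernoff applied to $|X_{j^*}(t)|$ with mean at most $p_{j^*}\cdot O(\alpha|E_\alpha|) = O(\alpha\log n/\eps^2)$ places this count well below $\tau = 64\alpha^2\log n/(c\eps^2)$ with failure probability $n^{-\Omega(1)}$; a union bound over the $O(cn)$ stream positions shows level $j^*$ is not terminated. Simultaneously Chernoff at $t=m$ puts $|X_{j^*}| \in (1\pm\eps)p_{j^*}|E_\alpha|$, which lies below the eligibility threshold $8\log n(1+\eps)/\eps^2$. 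To rule out the algorithm choosing a much smaller (and hence biased) $j$, I would argue that any level $j$ with $p_j|E_\alpha|$ exceeding $8\log n/\eps^2$ by a factor of $(1+\eps)/(1-\eps)$ has $|X_j| > 8\log n(1+\eps)/\eps^2$ with high probability, so the picked $j_{\min}$ cannot fall significantly before $j^*$. Consequently $p_{j_{\min}}|E_\alpha| = \Theta(\log n/\eps^2)$, and a final Chernoff application delivers $X_{j_{\min}}/p_{j_{\min}} = (1\pm O(\eps))|E_\alpha|$. The edge case $|X_0| \le \tau$ is handled separately: then level $0$ is never terminated and the returned value $|X_0| = |A(m)| = |E_\alpha|$ is exact. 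A union bound over the $O(\eps^{-1}\log n)$ parallel levels completes the argument.

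The main obstacle is the uniform-in-$t$ control of $|A(t)|$: the maximum-degree observation together with the arboricity-based matching characterization of Lemma~\ref{lem:E_alpha} is precisely what calibrates the threshold $\tau$, separating the over-sampling levels that must be terminated from the target level that must survive. A secondary technicality is ensuring that the Chernoff-plus-union-bound over stream positions, levels, and the ``must not be too low'' side remains within an overall high-probability budget; this only costs constant factor slack in the choice of $\tau$ and of the constants in $j^*$.
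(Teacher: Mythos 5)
Your argument follows essentially the same route as the paper's proof: identify the target level $i^*$ with $p_{i^*}|E_\alpha| = \Theta(\log n/\eps^2)$, show it survives termination by a uniform-in-$t$ bound on the active $\alpha$-good set plus a Chernoff and union bound over stream positions, show lower levels are disqualified because their counts exceed the post-processing threshold, and dispatch the $|X_0|\le\tau$ edge case separately. The one local divergence is how you bound the active set $A(t) = E_\alpha(G_t)$: you observe directly that $A(t)$ has maximum degree $\alpha+1$ (so $|A(t)| \le (2\alpha+1)M^*(G)$ via a maximal matching in the subgraph $A(t)\subseteq G$), whereas the paper invokes Lemma~\ref{lem:E_alpha}'s upper bound on the truncated graph $G_t$ to get $|E_\alpha(G_t)| \le (\tfrac54\alpha+2)M^*(G_t)\le(\tfrac54\alpha+2)M^*(G)$; both are then combined with the lemma's lower bound $M^*(G) = O(|E_\alpha|)$ to give $|A(t)| = O(\alpha)|E_\alpha|$, so this is a mild, self-contained simplification of a sub-step rather than a different proof strategy.
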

 
 \begin{proof} 
 It is clear that if $|X_0| \le \tau$ then $X_0=E_\alpha$ and
 the algorithm makes no error. In case $|X_0| > \tau$, we
 claim that $|E_\alpha| > \frac{c}{2\alpha^2} \tau$. To prove this let
 $t$ be the time step where $|X_0|$ exceeds $\tau$ and 
 let $G_t=(V,E^{(t)})$ be the graph where 
 $E^{(t)}=\{e_1,\ldots,e_t\}$. Clearly $M^*(G) \ge M^*(G_t)$ because the size of matching only increases.
 Abusing the notation, let $E_\alpha(G_t)$ denote the set of 
 $\alpha$-good edges at time $t$.   
 By Lemma \ref{lem:E_alpha}, we have $$\tau < |E_\alpha(G_t)| \le \left(\frac54 \alpha +2\right) M^*(G_t) \le 4\alpha M^*(G) \le \frac {2\mu}{\mu-2c}4\alpha|E_{\alpha}| \le \frac{2\alpha^2}{c} |E_{\alpha}|.$$ This proves the claim.
 
 Therefore in the following we assume that 
 $|E_\alpha| > \frac{c}{2\alpha^2} \tau $. Let $\tau'=\frac{8\log n}{\eps^2}$ and let $i^*$ be the integer such that $(1+\eps)^{i^*-1}\tau' \le |E_\alpha| \le (1+\eps)^{i^*}\tau'$. 
 
 Assuming the ${i^*}$-th level does not terminate before the end, we have $ \frac{\tau'}{(1+\eps)} \le \E[|X_{i^*}|] \le \tau'$.
 By a Chernoff bound, for each $i$ we have (again assuming we do not
 terminate the corresponding level)
 $$\Pr[||X_{i}|-\E(|X_i|)| \ge \eps \E(|X_i|)] \le \exp{\left(-\frac{\eps^2p_i|E_{\alpha}|}{4}\right)}.$$ 
 Therefore, 
 $
 \Pr[||X_{i^*}|-\E(|X_{i^*}|)| \ge \eps \E(|X_{i^*}|)] \le \exp{ ( -\frac{\eps^2|E_\alpha|}{2(1+\eps)^{i^*}} ) } \le \exp{(\frac{2\log n}{1+\eps})} \le O(n^{-1})
 $.  
 
As a result, with high probability $|X_{i^*}| \le \frac{8\log n(1+\eps)}{\eps^2}$. Moreover for all $i < i^{*}-1$, the corresponding levels either terminate prematurely or in the end we will have $|X_{i}| > \frac{8\log n(1+\eps)}{\eps^2}$ with high probability. Consequently
 $j \in \{i^*,i^*-1\}$. It remains to prove that runs
 corresponding to $i^*$ and $i^*-1$ will survive until the end with high probability. We prove this for $i^*$. The case of $i^*-1$ is similar. 

 Consider a fixed time $t$ in the stream and let $X_{i^*}^{(t)}$ be the set of sampled $\alpha$-good edges
 at time $t$ corresponding to the $i^*$-th level. Note 
 that $X_{i^*}^{(t)}$ contains the a subset of
 $\alpha$-good edges with respect to the stream $S_t =(e_1,\ldots,e_t)$. From the 
 definition of $i^*$ and our earlier observations we have 
$$\E[|X_{i^*}^{(t)}|] = \frac{|E_\alpha(G_t)|}{(1+\eps)^{i^*}} \le \frac{2\alpha^2|E_\alpha|}{c(1+\eps)^{i^*}} \le \frac{2\alpha^2\tau'}{c} $$  
By the Chernoff inequality for $\delta \ge 1$, 
 $$\Pr\left[|X_{i^*}^{(t)}| \ge (1+\delta)E(|X_{i^*}^{(t)}|)=\tau\right]\le \exp{\left( \frac{-\delta}{3} E(|X_{i^*}^{(t)}|)\right)}.$$ 
 From $\delta=\frac{\tau }{E(|X_{i^*}^{(t)}|)}-1 =\frac{\tau (1+\eps)^{i^*}}{|E_\alpha(G_t)|}-1$, we get 
 
 $$\Pr\left[|X_{i^*}^{(t)}| \ge \tau\right]\le \exp{\left( \frac{-\tau}{3} + \frac{|E_\alpha(G_t)|}{(1+\eps)^{i^*}}\right)} \le \exp{\left( \frac{-\tau}{3} +\frac{2\alpha^2\tau'}{c} \right)}$$

For $\tau \ge \frac{8\alpha^2\tau'}{c}$, the term inside the
exponent is smaller than $-2\log n$. It also satisfies $\delta \ge 1$. After applying the union bound, for all $t$ the size of $X_{i^*}^{(t)}$ is bounded by $\tau = \frac{64\alpha^2\log n}{c\eps^2}$ with high probability.  This finished the proof of
the lemma. 
 \end{proof}

Next, putting everything together, we prove Theorem \ref{thm:arboricity:logn}. 
  
\begin{proofof}{Theorem \ref{thm:arboricity:logn}}
The theorem follows from Lemmas \ref{lem:E_alpha} and \ref{lem:E_alpha_alg} and taking $\alpha=\mu+1=6c$.  Observe that the space cost of Algorithm~\ref{alg:ealpha} can be bounded: 
we have $\log_{1+\eps} m$ levels where each level runs at most $\tau$ concurrent $\alpha$-good tests otherwise it will be terminated. Each $\alpha$-good test keeps an edge
and two counters and as result it occupies $O(1)$ space. Consequently     
the space usage of the algorithm is bounded by $O(\tau\log_{1+\eps} m)$. The space
bound in the theorem follows from the facts that $\tau=O(\frac{c}{\eps^2}\log n)$ for $\alpha=6c$ and $m \le cn$. 
\end{proofof}

\newcommand{\Proc}{Proceedings of the~}

\newcommand{\STOC}{Annual ACM Symposium on Theory of Computing (STOC)}
\newcommand{\FOCS}{IEEE Symposium on Foundations of Computer Science (FOCS)}
\newcommand{\SODA}{Annual ACM-SIAM Symposium on Discrete Algorithms (SODA)}
\newcommand{\SOCG}{Annual Symposium on Computational Geometry (SoCG)}
\newcommand{\ICALP}{Annual International Colloquium on Automata, Languages and Programming (ICALP)}
\newcommand{\ESA}{Annual European Symposium on Algorithms (ESA)}
\newcommand{\CCC}{Annual IEEE Conference on Computational Complexity (CCC)}
\newcommand{\RANDOM}{International Workshop on Randomization and Approximation Techniques in Computer Science (RANDOM)}
\newcommand{\APPROX}{International Workshop on  Approximation Algorithms for Combinatorial Optimization Problems  (APPROX)}
\newcommand{\PODS}{ACM SIGMOD Symposium on Principles of Database Systems (PODS)}
\newcommand{\SSDBM}{ International Conference on Scientific and Statistical Database Management (SSDBM)}
\newcommand{\ALENEX}{Workshop on Algorithm Engineering and Experiments (ALENEX)}
\newcommand{\BEATCS}{Bulletin of the European Association for Theoretical Computer Science (BEATCS)}
\newcommand{\CCCG}{Canadian Conference on Computational Geometry (CCCG)}
\newcommand{\CIAC}{Italian Conference on Algorithms and Complexity (CIAC)}
\newcommand{\COCOON}{Annual International Computing Combinatorics Conference (COCOON)}
\newcommand{\COLT}{Annual Conference on Learning Theory (COLT)}
\newcommand{\COMPGEOM}{Annual ACM Symposium on Computational Geometry}
\newcommand{\DCGEOM}{Discrete \& Computational Geometry}
\newcommand{\DISC}{International Symposium on Distributed Computing (DISC)}
\newcommand{\ECCC}{Electronic Colloquium on Computational Complexity (ECCC)}
\newcommand{\FSTTCS}{Foundations of Software Technology and Theoretical Computer Science (FSTTCS)}
\newcommand{\ICCCN}{IEEE International Conference on Computer Communications and Networks (ICCCN)}
\newcommand{\ICDCS}{International Conference on Distributed Computing Systems (ICDCS)}
\newcommand{\VLDB}{ International Conference on Very Large Data Bases (VLDB)}
\newcommand{\IJCGA}{International Journal of Computational Geometry and Applications}
\newcommand{\INFOCOM}{IEEE INFOCOM}
\newcommand{\IPCO}{International Integer Programming and Combinatorial Optimization Conference (IPCO)}
\newcommand{\ISAAC}{International Symposium on Algorithms and Computation (ISAAC)}
\newcommand{\ISTCS}{Israel Symposium on Theory of Computing and Systems (ISTCS)}
\newcommand{\JACM}{Journal of the ACM}
\newcommand{\LNCS}{Lecture Notes in Computer Science}
\newcommand{\RSA}{Random Structures and Algorithms}
\newcommand{\SPAA}{Annual ACM Symposium on Parallel Algorithms and Architectures (SPAA)}
\newcommand{\STACS}{Annual Symposium on Theoretical Aspects of Computer Science (STACS)}
\newcommand{\SWAT}{Scandinavian Workshop on Algorithm Theory (SWAT)}
\newcommand{\TALG}{ACM Transactions on Algorithms}
\newcommand{\UAI}{Conference on Uncertainty in Artificial Intelligence (UAI)}
\newcommand{\WADS}{Workshop on Algorithms and Data Structures (WADS)}
\newcommand{\SICOMP}{SIAM Journal on Computing}
\newcommand{\JCSS}{Journal of Computer and System Sciences}
\newcommand{\JASIS}{Journal of the American society for information science}
\newcommand{\PMS}{ Philosophical Magazine Series}
\newcommand{\ML}{Machine Learning}
\newcommand{\DCG}{Discrete and Computational Geometry}
\newcommand{\TODS}{ACM Transactions on Database Systems (TODS)}
\newcommand{\PHREV}{Physical Review E}
\newcommand{\NATS}{National Academy of Sciences}
\newcommand{\MPHy}{Reviews of Modern Physics}
\newcommand{\NRG}{Nature Reviews : Genetics}
\newcommand{\BullAMS}{Bulletin (New Series) of the American Mathematical Society}
\newcommand{\AMSM}{The American Mathematical Monthly}
\newcommand{\JAM}{SIAM Journal on Applied Mathematics}
\newcommand{\JDM}{SIAM Journal of  Discrete Math}
\newcommand{\JASM}{Journal of the American Statistical Association}
\newcommand{\AMS}{Annals of Mathematical Statistics}
\newcommand{\JALG}{Journal of Algorithms}
\newcommand{\TIT}{IEEE Transactions on Information Theory}
\newcommand{\CM}{Contemporary Mathematics}
\newcommand{\JC}{Journal of Complexity}
\newcommand{\TSE}{IEEE Transactions on Software Engineering}
\newcommand{\TNDE}{IEEE Transactions on Knowledge and Data Engineering}
\newcommand{\JIC}{Journal Information and Computation}
\newcommand{\ToC}{Theory of Computing}
\newcommand{\MST}{Mathematical Systems Theory}
\newcommand{\Com}{Combinatorica}
\newcommand{\NC}{Neural Computation}
\newcommand{\TAP}{The Annals of Probability}
\newcommand{\TCS}{Theoretical Computer Science}
\newcommand{\IPL}{Information Processing Letter}
\newcommand{\Algorithmica}{Algorithmica}

\bibliographystyle{plain}
\bibliography{matching,gem}{}

\end{document}